\newtheorem{theorem}{Theorem} 
\newtheorem{lemma}{Lemma} 
\newtheorem{definition}{Definition}
\newtheorem{corollary}{Corollary}
\newtheorem{proposition}[theorem]{Proposition}
\theoremstyle{remark}
\newtheorem{construction}[theorem]{Construction}
\newcommand{\pivd}{\textsc{$\Pi$-Vertex Deletion}} 
\newcommand{\ppivd}{\textsc{Planar $\Pi$-Vertex Deletion}} 
\newcommand{\cpivd}{\textsc{Connected $\Pi$-Vertex Deletion}} 
\newcommand{\cppivd}{\textsc{Connected Planar $\Pi$-Vertex Deletion}} 
\newcommand{\vc}{\textsc{Vertex Cover}} 
\newcommand{\pvc}{\textsc{Planar Vertex Cover}} 
\newcommand{\scvc}{\textsc{Subcubic Girth-$d$ Vertex Cover}}
\newcommand{\sctvc}{\textsc{Subcubic Girth-$3d$ Vertex Cover}}
\newcommand{\decprob}[3]{%
    \begin{quote}
        \textsc{{#1}}
    \begin{compactdesc}
    \item[Input:] #2
    \item[Question:] #3
    \end{compactdesc}
  \end{quote}

  }
\newcommand{\appendixproof}[2]{% \stepcounter{Bew1}See \hyperref[\arabic{Bew1}]{Appendix~A.\arabic{Bew1}} for a proof of #1.\label{L\arabic{Bew1}}\gappto{\appendixProofText}{\phantomsection\stepcounter{Bew2}\label{\arabic{Bew2}}\subsection{Proof of #1} 
#2
%}
}
\begin{document}

\title{Tight Running Time Lower Bounds for Vertex~Deletion Problems\footnote{Supported by the DFG, project KO~3669/4-1.}}
\author{Christian Komusiewicz}
\affil{Institut f\"ur Informatik, Friedrich-Schiller-Universit\"at Jena\\Ernst-Abbe-Platz 2, D-07743 Jena\\\texttt{christian.komusiewicz@uni-jena.de}}

\maketitle  

\begin{abstract} For a graph class~$\Pi$, the~\pivd{} problem has as
  input an undirected graph~$G=(V,E)$ and an integer~$k$ and asks
  whether there is a set of at most~$k$ vertices that can be deleted
  from~$G$ such that the resulting graph is a member of~$\Pi$.  By a
  classic result of Lewis and Yannakakis~[J.~Comput.~Syst.~Sci.~'80],
  \textsc{$\Pi$-Vertex Deletion} is NP-hard for all hereditary
  properties~$\Pi$. We adapt the original NP-hardness construction to
  show that under the Exponential Time Hypothesis (ETH) tight
  complexity results can be obtained. We show that \pivd{} does not
  admit a~$2^{o(n)}$-time algorithm where~$n$ is the number of
  vertices in~$G$. We also obtain a dichotomy for running time bounds
  that include the number~$m$ of edges in the input graph: On the one
  hand, if~$\Pi$ contains all independent sets, then there is
  no~$2^{o(n+m)}$-time algorithm for~\pivd{}. On the other hand, if
  there is a fixed independent set that is not contained in~$\Pi$ and
  containment in~$\Pi$ can determined in~$2^{O(n)}$ time or~$2^{o(m)}$
  time, then~\pivd{} can be solved in~$2^{O(\sqrt{m})}+O(n)$
  or~$2^{o({m})}+O(n)$~time, respectively. We also consider
  restrictions on the domain of the input graph~$G$. For example, we
  obtain that \pivd{} cannot be solved in~$2^{o(\sqrt{n})}$ time
  if~$G$ is planar and~$\Pi$ is hereditary and contains and excludes
  infinitely many planar graphs. Finally, we provide similar results
  for the problem variant where the deleted vertex set has to induce a
  connected graph.
\end{abstract}

\section{Introduction}
\label{sec:intro}

In graph modification problems, the aim is to modify a given
graph such that it fulfills a certain property~$\Pi$, for example
being acyclic, bipartite, having bounded diameter, or being an independent set (that is, an edgeless graph). Formally, a
graph property is simply a set of graphs. The focus is usually on
nontrivial graph properties, that is, properties such that the set of
graphs fulfilling the property and the set of graphs \emph{not}
fulfilling the property are infinite. Throughout this work, we
consider only nontrivial graph properties.

The most common operations to modify the graph are vertex deletion,
edge deletion and edge addition. Graph modification problems have many
applications, for example in data anonymization~\cite{Nic14}, data
clustering~\cite{BB13}, data visualization~\cite{BBB+12}, in the
modeling of complex networks via differential
equations~\cite{FMK+13}, and in social network analysis~\cite{BBH11}.

In this work, we consider graph modification problems where only
vertex deletions are allowed. Formally, we study the following
problem: \decprob{\pivd}{An undirected graph~$G=(V,E)$ and an
  integer~$k$.}{Is there a set~$S\subseteq V$ such that~$|S|\le k$
  and~$G[V\setminus S]$ is contained in~$\Pi$?}  

A natural class of graph properties occurring in many applications are
the hereditary graph properties: A graph property is \emph{hereditary}
if it is closed under vertex deletion. Equivalently, if a graph~$G$
fulfills a hereditary property~$\Pi$, then every \emph{induced
  subgraph} of~$G$ fulfills~$\Pi$. The family of hereditary graph
properties includes all monotone graph properties (which are the
properties closed under vertex and edge deletions) and all
minor-closed graph properties (which are the properties closed under
vertex and edge deletions and edge contractions).

Hereditary graph properties can be characterized by
forbidden induced subgraphs. That is, there is a (possibly infinite)
set of graphs~$\mathcal{F}$ such that~$G$ has property~$\Pi$ if and
only if it does not contain a graph from~${\cal F}$ as an induced
subgraph. For example, a graph is a forest if and only if it does not
contain any cycle as induced subgraph. For these graph properties,
there is the following classic hardness result due to
Lewis~and~Yannakakis~\cite{LY80}.
\begin{theorem}[\cite{LY80}]
  Let~$\Pi$ be a hereditary nontrivial graph
  property. Then~\textsc{$\Pi$-Vertex Deletion} is NP-hard.
\end{theorem}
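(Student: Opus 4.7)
The plan is to reduce from \vc{} (or, more generally, from a problem whose solutions correspond to hitting every member of some natural graph family) and to exploit the forbidden-induced-subgraph characterization of~$\Pi$. Since~$\Pi$ is hereditary and nontrivial, there is a (possibly infinite) family~$\mathcal{F}$ of graphs such that $G \in \Pi$ if and only if~$G$ contains no induced subgraph from~$\mathcal{F}$, and moreover both~$\Pi$ and its complement are infinite. My proof would split into cases according to which ``simple'' graphs are excluded by~$\Pi$.

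In the easy cases, some independent set~$I_t$ or some complete graph~$K_t$ lies in~$\mathcal{F}$. If~$I_t \in \mathcal{F}$ and all smaller graphs are in~$\Pi$, then \pivd{} on graphs whose only possible forbidden induced subgraph is~$I_t$ is exactly the problem of hitting every independent set of size~$t$, which is NP-hard already for~$t=2$ (giving \vc{}); the case~$K_t \in \mathcal{F}$ is symmetric. One must also enforce that no other graphs in~$\mathcal{F}$ accidentally appear in the constructed instance, which can be arranged by padding with appropriate universal or isolated vertices so that the constructed graphs remain outside the influence of any other member of~$\mathcal{F}$.

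The hard case is when~$\Pi$ contains every independent set and every complete graph. Here, following Lewis and Yannakakis, I would first prove a structural lemma asserting the existence of a ``sensitive'' configuration: a graph $H \in \Pi$ together with two distinguished vertices $u,v$ such that identifying~$u$ and~$v$ with the endpoints of an edge, or adding a vertex adjacent to a specific set in~$H$, produces a graph outside~$\Pi$. Existence of such a configuration follows from the infiniteness of~$\mathcal{F}$ combined with the fact that pure independent sets and pure cliques are all in~$\Pi$, forcing some ``mixed'' graph to appear in~$\mathcal{F}$ and, by minimality, to have the required local structure. Given the configuration, the reduction from \vc{} attaches a disjoint copy of~$H$ to each edge of the input graph~$G$, identifying the sensitive vertices with the edge endpoints; a vertex cover of~$G$ then corresponds exactly to a deletion set of the constructed graph, because covering an edge removes the trigger in the attached copy of~$H$, while leaving an edge uncovered leaves a forbidden induced subgraph intact.

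The main obstacle is the structural lemma in the mixed case: extracting a ``trigger'' from~$\mathcal{F}$ that is localized enough to be glued onto an edge gadget without inadvertently creating or destroying other forbidden induced subgraphs. This requires a careful analysis of a minimum-order member of~$\mathcal{F}$ and a padding construction ensuring that the only forbidden subgraphs arising in the reduction are the intended ones. Once this lemma is in hand, the correctness of the reduction and the polynomial bound on its size are routine.
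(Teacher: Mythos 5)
Your high-level plan — reduce from \vc{}, replace each edge by a gadget built from a forbidden induced subgraph of~$\Pi$, and use Ramsey-type reasoning to handle the independent-set/clique dichotomy — is the right skeleton, and matches Yannakakis's approach in spirit. But the proposal is missing the one idea that makes the construction actually work, and in its place suggests something that fails. You write that the ``sensitive configuration'' should be extracted from ``a minimum-order member of $\mathcal{F}$.'' Minimum order is the wrong selection criterion. The central obstacle, which the paper states explicitly, is that when the gadget graph has a cut-vertex, two gadget copies attached to a common vertex of the \vc{} instance (or a gadget copy together with the pendant pieces attached to the vertices) can reassemble into an \emph{unintended} forbidden induced subgraph, destroying the correspondence between vertex covers and deletion sets. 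Yannakakis handles this by choosing, among the minimal forbidden subgraphs, the one whose largest connected component $H_1$ has a vertex $c$ minimizing the $\alpha$-sequence (the sorted list of component sizes of $H_1-c$), and refining this lexicographic minimization to disconnected forbidden subgraphs via the $\Gamma$-sequence. The gadget is then $J(H_1)$, the cut-vertex $c$ together with a single largest component of $H_1-c$; a complementary piece $D = H_1 - (V(J)\setminus\{c\})$ is hung off every vertex, and a base graph containing the other components of $H_\Pi$ plus $d-1$ spare copies of $H_1$ calibrates exactly how many copies of $H_1$ the resulting graph can tolerate. This minimization is precisely what guarantees that reassembled components have lexicographically smaller $\alpha$-sequence than $H_1$ and hence cannot form $H_\Pi$; a minimum-order forbidden subgraph has no such guarantee, and your ``padding with universal or isolated vertices'' cannot repair this because it operates globally while the failure is local to where gadgets share a cut-vertex.

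Two smaller points. First, your case split (``some $I_t$ or $K_t$ in $\mathcal F$'' vs.\ ``$\Pi$ contains all independent sets and all cliques'') is not the decomposition the proof uses and leaves a gap: if $I_t\in\mathcal F$ for $t>2$, hitting all $I_t$'s does not by itself land you in $\Pi$, since $\mathcal F$ may contain other graphs, and your padding remark does not explain how to neutralize them. The actual argument uses Ramsey only to conclude that $\Pi$ contains all independent sets or all cliques, then complements once to reduce to the former case, and runs a single uniform construction there. Second, your ``sensitive configuration'' is described as a graph $H\in\Pi$ with two distinguished vertices, but the gadget is not drawn from $\Pi$ at all; it is a carefully chosen induced subgraph of a minimal \emph{forbidden} graph. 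Without the $\alpha$/$\Gamma$-sequence machinery nailing down which forbidden graph and which cut-vertex to use, the correctness claim at the end of your sketch does not go through.
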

While this result rules out the possibility of polynomial-time
algorithms, it leaves open the existence of subexponential-time
algorithms for these problems. Our aim is to rule out the existence of
subexponential-time algorithms for these problems by assuming the
Exponential Time Hypothesis (ETH)~\cite{IPZ01}. To this end, let us inspect the reduction by Yannakakis\footnote{The article by Lewis and Yannakakis explicitly attributes the construction which will form the basis of our reduction to Yannakakis~\cite{LY80}.} more closely.

The source problem in this reduction is \textsc{Vertex Cover}.
\decprob{\vc}{An undirected graph~$G=(V,E)$ and an integer~$k$.}{Is
  there a set~$S\subseteq V$ such that~$|S|\le k$ and~$G[V\setminus
  S]$ is an independent set?}  Assuming ETH, \vc{} does not admit
a~$2^{o(n+m)}$-time algorithm where~$n$ is the number of vertices in
the input graph and~$m$ is the number of edges. Hence, \vc{} is a
suitable source problem for a reduction in the ETH-framework. The
construction of Yannakakis, however, reduces $n$-vertex instances
of~\vc{} to~$\Theta(n^2)$-vertex instances of ~\pivd{}. This blowup in
the vertex number only yields a running time lower bound
of~$2^{o(\sqrt{n})}$. Thus, to obtain a better lower bound, we need to
``re-engineer'' the original construction.

We adapt the hardness construction of Yannakakis in two
ways: We take a slightly more fine-grained approach for choosing the
graphs which are used as basis for the gadgets in the reduction from the
\vc{} instance and we reduce from a more restricted
variant of \vc. These two modifications of the
reduction allow us to obtain our main result. Here,~$n$ denotes the
number of vertices in the input graph and~$m$ denotes its number of
edges.
\begin{theorem}\label{thm:main}
  Let~$\Pi$ be a hereditary nontrivial graph property, then:
  \begin{enumerate}
  \item If the ETH is true, then \pivd{} cannot be solved in~$2^{o(n+\sqrt{m})}$ time,
  \item If the ETH is true, then \pivd{} cannot be solved in~$2^{o(n+m)}$ time if~$\Pi$ contains
    all independent sets.
  \item If~$\Pi$ excludes some independent set, then \pivd{} can be
    solved in~$2^{O(\sqrt{m})}+O(n)$ time if and only if membership
    in~$\Pi$ can be recognized in~$2^{O(n)}$ time.
  \item If~$\Pi$ excludes some independent set, then \pivd{} can be
    solved in~$2^{o(m)}+O(n)$ time if and only if membership in~$\Pi$
    can be recognized in~$2^{o(m)}$ time.
  \end{enumerate}
\end{theorem}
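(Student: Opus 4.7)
My plan for Theorem~\ref{thm:main} is to attack the four parts together via a refined Lewis--Yannakakis reduction (for the lower bounds in parts~1 and 2) and a preprocess-and-enumerate algorithm using the hypothesised $\Pi$-membership subroutine (for the upper bounds in parts~3 and 4). For part~1, I would start from \scvc{}, a bounded-degree variant of \vc{} whose instances have $m = O(n)$ edges and for which ETH rules out $2^{o(n)}$-time algorithms. Fix a minimal witness graph $H$ for the nontriviality of~$\Pi$ (a smallest $H \notin \Pi$; then every proper induced subgraph of $H$ lies in~$\Pi$), which has constant size. For each edge~$uv$ of the \scvc{}-instance, attach a copy of~$H$ minus a distinguished vertex to attachment sites representing~$u$ and~$v$, as in the classical construction; standard arguments give that vertex covers of size~$k$ correspond to $\Pi$-deletion sets of size~$ck$ for a constant~$c$. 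Since $H$ has constant size, the reduction produces $n' = O(n)$ and $m' = O((n')^2)$, hence $n' + \sqrt{m'} = O(n)$, and a $2^{o(n' + \sqrt{m'})}$-time algorithm for \pivd{} would give a $2^{o(n)}$-time algorithm for \scvc, contradicting ETH. For part~2, the extra hypothesis that $\Pi$ contains all independent sets forces $H$ to have at least one edge and in fact allows picking $H$ with only $O(1)$ edges, so each gadget contributes only $O(1)$ new edges; starting now from general \vc{}, whose ETH lower bound is~$2^{\Omega(n+m)}$, the reduction yields $n' = O(n+m)$ and $m' = O(n+m)$, and a $2^{o(n'+m')}$-time algorithm would contradict ETH.

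For the algorithmic upper bounds in parts~3 and 4, the ``only if'' direction of each biconditional is immediate: setting $k=0$, the \pivd{}-algorithm decides $\Pi$-membership on the input, and the claimed running times specialise to $2^{O(\sqrt{m})} + O(n) \le 2^{O(n)}$ and $2^{o(m)} + O(n) \le 2^{o(m)}$, respectively. For the ``if'' direction (part~3), the key observation is that if $\Pi$ excludes some~$I_r$, then every $G \in \Pi$ satisfies $\alpha(G) \le r-1$; applying Tur\'an's theorem to the complement of~$G$ then yields $|V(G)| = O(\sqrt{|E(G)|})$. Hence in any YES-instance of \pivd{} the surviving vertex set $T = V \setminus S$ has $|T| = O(\sqrt{m})$. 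In $O(n+m)$ time one removes all but $r-1$ isolated vertices (since $T$ contains at most $r-1$ of them) and rejects immediately when $n - k > c\sqrt{m}$ for a suitable constant~$c$. In the remaining case, the ``interesting'' part of the graph has $O(\sqrt{m})$ vertices relevant to $T$, and one enumerates the surviving sets~$T$ and tests each via the assumed $2^{O(|T|)}$-time membership algorithm, giving $2^{O(\sqrt{m})} + O(n)$ in total. Part~4 follows by the same scheme with each $2^{O(\cdot)}$-bound replaced by~$2^{o(\cdot)}$.

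The main obstacle I anticipate is the simultaneous control of $n'$ and $m'$ in the lower-bound reduction. The classical Lewis--Yannakakis construction is lavish with edges and only promises a good bound on the vertex count, so obtaining $m' = O(n + m)$ in part~2 requires a careful choice of the witness graph~$H$ together with a sparse attachment strategy, so that each input edge contributes only $O(1)$ new edges while the gadgets still enforce the vertex-cover correspondence; this in turn forces a case distinction on whether $\Pi$ contains every independent set. A secondary subtlety is shrinking the ``relevant'' part of the input to $O(\sqrt{m})$ vertices in parts~3 and 4 within the target time bound: a naive enumeration of candidate sets~$T$ of size $O(\sqrt{m})$ in a graph on up to $O(m)$ non-isolated vertices is too slow by a factor of $\log m$ in the exponent, so the preprocessing has to go beyond removing isolated vertices and exploit, e.g., a maximum matching of~$G$ to bound the ``branching'' between $T \cap V(M)$ and the independent remainder~$V \setminus V(M)$.
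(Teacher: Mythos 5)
Your proposal recovers the top-level shape of the argument (a gadget reduction from a sparse vertex-cover variant for parts 1--2, and a branching algorithm with a $\Pi$-membership oracle for parts 3--4), but it has gaps at exactly the places where the paper's real work lies. For the lower bounds, ``attach a copy of $H$ minus a distinguished vertex \dots{} standard arguments give'' glosses over the hard part: if $H$ has a cut-vertex, or is disconnected, two gadgets meeting at a shared attachment vertex (or a gadget together with neighbouring debris) can recreate a forbidden subgraph, so the reduction with an arbitrary smallest witness $H$ is unsound. The paper has to pick $H_\Pi$ as the forbidden subgraph with lexicographically smallest $\Gamma$-sequence, replace edges only by the subgraph $J(H_1)$ attached at a vertex $c$ realising the smallest $\alpha$-sequence of the heaviest component $H_1$, and add a ``base graph'' of spare copies of the other components; none of this is standard or optional. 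Worse, for part 2 you propose reducing from plain \vc{}, which is provably wrong: if $\Pi$ is the $2K_2$-free graphs, the reduced instance from a triangle with $k=1$ is a yes-instance while the \vc{} instance is a no-instance. The paper therefore reduces from \sctvc{} (subcubic, girth at least $3d$), and the girth bound is used essentially in the backward direction to turn a small maximum matching in the residual graph into a small vertex cover. (Also, the paper does not prove part 1 by a separate reduction; it follows from part 2 via Ramsey's theorem plus complementation, which explains the $O(n^2)$ edge blow-up.)

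For parts 3--4, your Tur\'an observation (any graph in $\Pi$ with $t$ non-isolated vertices has $\Omega(t^2)$ edges, so the surviving set $T$ has $O(\sqrt{m})$ vertices) is correct, but you already identify that naively enumerating $T$ costs $2^{O(\sqrt{m}\log m)}$, and your suggested repair via a maximum matching does not work: $V(M)$ can still have $\Theta(m)$ vertices, so $T\cap V(M)$ still has too many candidate subsets. The paper's algorithm avoids the blow-up by splitting into the high-degree set $V_h$ (degree at least $2\sqrt{m}$, so $|V_h|\le\sqrt{m}$ and $T\cap V_h$ can be enumerated in $2^{\sqrt{m}}$ time) and the low-degree set $V_\ell$; in $T\cap V_\ell$ a maximal independent set $I_\ell$ has size $<d$ and dominates, so $T\cap V_\ell\subseteq N[I_\ell]$ with $|N(I_\ell)|<2d\sqrt{m}$, and one enumerates $I_\ell$ in $n^{O(1)}$ ways and then subsets of $N(I_\ell)$ in $2^{O(\sqrt{m})}$ ways. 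You would need to replace your matching idea with this degree split (or an equivalent structural decomposition) before the upper bound goes through.
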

The first and second statement of the theorem follow from our adaption
of Yannakakis' construction, described in
Section~\ref{sec:hardness}. The third and fourth statement follow from
a simple combinatorial algorithm, described in
Section~\ref{sec:subexp}. 

In addition to our main theorem, we observe that the reduction yields
a similarly tight lower bound for vertex deletion problems in planar
graphs. More precisely, we show that any nontrivial variant
of~\textsc{$\Pi$-Vertex Deletion} cannot be solved
in~$2^{o(\sqrt{n})}$ time on $n$-vertex planar graphs. This result
aligns nicely with the so-called square root phenomenon~\cite{Marx13}
that many NP-hard problems on planar graphs can be solved
in~$2^{O(\sqrt{n})}$ time but not faster. Moreover, we also obtain
tight running time lower bounds for input graphs with bounded degree
and bounded degeneracy (both depending on~$\Pi$) and for input graphs
containing a dominating vertex. Finally, we consider \cpivd{}, where
the vertex set that is deleted has to induce a connected graph. We
show that the bounds of Theorem~\ref{thm:main} also hold for this
problem variant.

\subparagraph{Related Work.}  Fellows et al.~\cite{FGMN11} adapted the
reduction of Yannakakis to provide NP-hardness results for the variant
of \pivd{} where one aims to find a solution that is disjoint from and
smaller than a given solution. The study considers the case that~$\Pi$
is closed under taking disjoint unions of graphs. These are exactly
the graph properties that have only connected minimal forbidden
induced subgraphs. As an introduction to their result, they describe a
modified version of the original reduction which already implies
Theorem~\ref{thm:main} for this case. In subsequent
work,~Guo~and~Shrestha~\cite{GS14} investigated the complexity of the
remaining case when the forbidden subgraphs have disconnected
forbidden subgraphs, showing NP-hardness for most cases.

When edge deletions or additions are the allowed operations, there is
no hardness result as general as the one of Lewis and
Yannakakis~\cite{LY80}. Nevertheless, many variants of
\textsc{$\Pi$-Edge Deletion} are NP-hard~\cite{Yan81}.  If~$\Pi$ has
exactly one forbidden induced subgraph, then a complexity dichotomy
and tight ETH-based running time lower bounds can be achieved for edge
deletion problems~\cite{ASS15a}. In similar spirit, tight running time
lower bounds have been obtained for some \textsc{$\Pi$-Completion} problems, where
one may only add edges to obtain the graph property~$\Pi$~\cite{DFPV15}.

%\todo{Abgrenzung von Marx-Papieren}
\section{Preliminaries}
\label{sec:prelim}
Unless stated otherwise, we consider undirected simple graphs~$G=(V,E)$
where~$V$ is the vertex set and~$E\subseteq \{\{u,v\}\mid u,v \in V\wedge u\neq v\}$
is the edge set of the graph. We denote the vertex and edge set of a
graph~$G$ also by~$V(G)$ and~$E(G)$, respectively. The \emph{(open)
  neighborhood} of a vertex~$v$ is denoted by~$N(v):=\{u\mid
\{u,v\}\in E\}$ and the \emph{closed neighborhood} by~$N[v]:=N(v)\cup
\{v\}$. For a vertex set~$S\subseteq V$, let~$G[S]:=(S,\{\{u,v\}\in
E\mid u,v\in S\})$ denote the \emph{subgraph of~$G$ induced
  by~$S$}. For a vertex set~$S\subseteq V$, let~$G-S:=G[V\setminus S]$
denote the induced subgraph of~$G$ obtained by deleting the vertices
of~$S$ and their incident edges. Similarly, for a vertex~$v\in V$
let~$G-v:= G-\{v\}$ denote the graph obtained by deleting this vertex
and all its incident edges. A vertex~$v$ in a graph is a
\emph{cut-vertex} if~$G-v$ has more connected components than~$G$. A
graph is~\emph{$d$-degenerate} if every induced subgraph contains a
vertex of degree at most~$d$. Let~$A=(a_1,a_2, \ldots , a_n)$
and~$B=(b_1,b_2,\ldots , b_m)$ be two sequences where~$i$ is the
smallest index such that~$a_i\neq b_i$. Then,~$A$ is
\emph{lexicographically smaller than}~$B$ if~$a_i<b_i$.

We say that a parameterized problem~$(I,k)$ has a
\emph{subexponential-time algorithm} if it can be solved
in~$2^{o(k)}\cdot n^{O(1)}$ time. In this work, the parameters under
consideration are~$n$, the number of vertices in the input graph,
and~$m$, the number of edges in the input graph. The \emph{Exponential
  Time Hypothesis (ETH)} states essentially that the \textsc{3-SAT}
problem, which is given a boolean formula in 3-conjunctive normal form
and asks whether this formula can be satisfied by assigning truth
values to the variables, cannot be solved in~$2^{o(n)}$ time where~$n$
is the number of variables in the input formula~\cite{IPZ01}. For a
survey on lower bounds based on the Exponential Time Hypothesis, refer
to~\cite{LMS11}.

We now show that, assuming the ETH, the following special case of
\vc{} does not admit a subexponential-time algorithm. This problem
variant will be used as source problem in our reduction; here, a
subcubic graph is one with maximum degree at most three.
\decprob{\scvc}{An undirected subcubic $2$-degenerate graph~$G=(V,E)$ such that every
  cycle in~$G$ has length at least~$d$ and an integer~$k$.}{Does~$G$
  have a vertex cover of size at most~$k$?}  We show the running time
lower bound by devising a simple reduction from~\vc{} that replaces
edges by long paths on an even number of vertices. % For the sake of
% completeness, we give a formal proof.
\begin{theorem}\label{thm:scvc}
  If the ETH is true, then \scvc{} cannot be solved in~$2^{o(n+m)}$
  time.
\end{theorem}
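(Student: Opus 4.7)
My plan is to reduce from \vc{} on subcubic graphs, for which the ETH rules out $2^{o(n+m)}$-time algorithms via the standard reduction from 3-SAT combined with the sparsification lemma. Since the target problem already requires subcubic inputs, edge subdivision, which preserves subcubicness, will suffice to additionally enforce large girth and 2-degeneracy; the girth parameter~$d$ is fixed, so a constant number of subdivisions per edge suffices.

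Fix an integer $t \geq 1$ with $2t + 1 \geq d$. Given a subcubic instance $(G, k)$ of \vc{} with $m$ edges, construct $G'$ from $G$ by replacing every edge $\{u, v\}$ with a path $u, x_1^{uv}, x_2^{uv}, \ldots, x_{2t}^{uv}, v$ on $2t+2$ vertices (an even number, as promised), and set $k' := k + tm$. I would then verify three structural properties of~$G'$: it is subcubic, because every branch vertex keeps its degree in~$G$ and every subdivision vertex has degree two; it has girth at least $3(2t+1) \geq d$, since every cycle in~$G'$ descends from a cycle of length at least three in~$G$ with each edge inflated to a path of length $2t+1$; and it is $2$-degenerate, because the branch vertices form an independent set in~$G'$, so every induced subgraph either contains a subdivision vertex of degree at most two or is edgeless.

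The instance equivalence rests on a tight cost analysis per subdivision path. A short case analysis on the subdivision of $\{u, v\}$ shows that the minimum number of internal vertices needed in any vertex cover of~$G'$ equals~$t$ if at least one of $u, v$ lies in the cover and equals $t+1$ otherwise. Hence every vertex cover~$S$ of~$G$ of size at most $k$ extends to a vertex cover of~$G'$ of size at most $k + tm = k'$ by adding the optimal $t$ internals per subdivision. For the converse, given a vertex cover $S'$ of $G'$ with $|S'| \leq k'$, apply a local swap argument: whenever some edge $\{u,v\}$ of~$G$ has both endpoints outside $S'$, the subdivision path must contain both $x_1^{uv}$ and $x_{2t}^{uv}$ together with $t-1$ further internals chosen in an alternating pattern, and exchanging $x_1^{uv}$ for $u$, with a corresponding re-selection of the remaining internals, yields another vertex cover of the same size that now contains~$u$. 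Iterating produces a vertex cover whose intersection with $V(G)$ is itself a vertex cover of~$G$, and the cost accounting then forces this intersection to have size at most~$k$.

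Finally, $|V(G')| + |E(G')| = O(n + m)$ since~$t$ depends only on the fixed constant~$d$, so a $2^{o(n' + m')}$-time algorithm for \scvc{} would imply a $2^{o(n+m)}$-time algorithm for subcubic \vc{}, contradicting the ETH. The main obstacle will be making the swap argument clean: after exchanging an internal vertex for a branch vertex, one must re-select the remaining internal vertices along the subdivision so as to continue covering every path edge, which amounts to a parity consideration but otherwise should be routine.
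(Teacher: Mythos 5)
Your proposal is correct and follows essentially the same route as the paper: reduce from \vc{} on subcubic graphs, subdivide each edge by a fixed even number of vertices so the blow-up is linear and girth becomes at least $d$, and argue equivalence by a tight per-path cost accounting with a local normalization (your swap step plays the same role as the paper's observation that one may assume no two adjacent path vertices are both deleted). The only cosmetic difference is parameterization ($2t$ internal vertices per edge with $2t+1\ge d$ versus the paper's $d$ internal vertices with $d$ forced even), which does not change the argument.
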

\appendixproof{Theorem~\ref{thm:scvc}}
{
\begin{proof}
  We reduce from \vc{} in subcubic graphs, which is known to admit
  no~$2^{o(n)}$-time algorithm (assuming ETH)~\cite{JS99}. Assume
  that~$d$ is even; otherwise, increase~$d$ by one. Given a vertex
  cover instance~$(G=(V,E),k)$ the reduction is simply to replace each
  edge~$\{u,v\}$ of~$G$ by a path on~$d$ vertices and to make one
  endpoint of this path adjacent to~$u$ and the other one adjacent
  to~$v$. Equivalently, subdivide each edge~$d$ times. Let~$G'$ denote
  the resulting graph, and call the paths that are added during the
  construction the~$d$-paths of~$G'$.  To complete the construction,
  set~$k':=k+(d/2)\cdot |E(G)|$. As we will show, the two instances
  are equivalent. Assume for now that equivalence holds. The shortest
  cycle in~$G'$ clearly contains more than~$d$ vertices (in fact it
  contains more than~$3d$ vertices). Moreover, the construction does
  not increase the maximum degree of~$G$, hence~$G'$ has maximum
  degree three, and each original edge was replaced by a path,
  hence~$G$ is 2-degenerate. Finally,~$G'$ has~$O(|V(G)|)$ many
  vertices and edges. Thus, any algorithm that solves~$(G',k')$
  in~$2^{o(|V(G')|+|E(G')|)}$ time can be used to obtain an algorithm
  that solves \vc{} on subcubic~$n$-vertex graphs in~$2^{o(n+m)}$ time
  which contradicts the ETH.

  Thus to complete the proof it remains to show equivalence of the
  instances.
  \begin{quote}
    $(G,k)$ is a yes-instance~$\Leftrightarrow$~$(G',k')$ is
    yes-instance.
  \end{quote}
  
  $\Rightarrow$: Let~$S$ be size-$k$ vertex cover of~$G$. Consider the
  graph~$G'-S$. By construction, all remaining edges are incident with
  vertices in~$d$-paths. Moreover, for each~$d$-path, the nonpath
  neighbor of at least one of its endpoints is contained in~$S$
  since~$S$ is a vertex cover. Thus, deleting the other endpoint of
  the path and then every other vertex of the path gives a set
  of~$d/2$ vertex deletions that together destroy all edges that are
  incident with vertices of this path. This can be done for
  all~$|E(G)|$ many~$d$-paths, resulting in~$(d/2)\cdot |E(G)|$ vertex
  deletions. Consequently,~$(d/2)\cdot |E(G)|+k$ vertex deletions
  suffice to destroy all edges in~$G'$.

  $\Leftarrow$: Observe that any vertex cover contains at
  least~$(d/2)\cdot |E(G)|$ vertices that are on~$d$-paths. Moreover,
  there is a minimum-cardinality vertex cover such that no two neighbors~$u$
  and~$v$ in a~$d$-path are deleted: Deleting~$u$ makes~$v$ a
  degree-one vertex which means that deleting the other neighbor
  of~$v$ instead of~$v$ gives a vertex cover which is at most as large
  as one that deletes~$u$ and~$v$. Thus, there is a minimum-cardinality
  vertex cover~$S'$ that contains for each~$d$-path exactly one of its
  two endpoints. Moreover, the neighbor of the other endpoint that is
  not in the~$d$-path is contained in~$S'$. Let~$S:=S'\cap V$ denote
  the vertices from~$V$ that are contained in~$S$. For each
  edge~$\{u,v\}\in E$, there is a~$d$-path connecting~$u$ and~$v$
  in~$E$. By the discussion above, either~$u$ or~$v$ is
  in~$S$. Thus,~$S$ is a vertex cover. Since~$S'$ contains at
  least~$(d/2)\cdot |E(G)|$ vertices from~$d$-paths, we have~$|S|\le
  k$.
\end{proof}
}
\section{Hardness for Graph Properties Containing all Independent
  Sets}
\label{sec:hardness}
The main idea of the reduction of Yannakakis is the following. Reduce
from \vc{} by replacing edges of the \vc{} instance by some graph~$H$
that is not contained arbitrarily often in any graph fulfilling~$\Pi$ (and
thus needs to be destroyed by a vertex deletion). Then, deleting
edges in the \vc{} instance corresponds to destroying forbidden
induced subgraphs in the \pivd{} instance. The main technical
difficulty arises when the forbidden subgraphs for graph
property~$\Pi$ have cut-vertices as in this case, two graphs that are
used to replace two edges incident with the same vertex may form
another forbidden subgraph of~$\Pi$. Thus, the graph~$H$ which is used
in the construction must be chosen carefully. To this end, Yannakakis
introduces the notion of~$\alpha$-sequence for connected graphs.
\begin{definition}
  Let~$H$ be a connected graph, let~$c$ be a vertex in~$H$, and
  let~$H_1, \dots , H_\ell$ denote the connected components of~$H-c$
  such that~$|V(H_1)|\ge |V(H_2)|\ge \dots
  |V(H_\ell)|$. Then,~$\alpha(H,c):=(|V(H_1)|,|V(H_2)|,\dots
  ,|V(H_\ell)|)$. The \emph{$\alpha$-sequence} of~$H$,
  denoted~$\alpha(H)$, is the lexicographically smallest sequence such
  that~$\alpha(H)=\alpha(H,c)$ for some~$c\in V(H)$.
\end{definition}
An example of a graph~$H$ with its~$\alpha$-sequence is presented in
Figure~\ref{fig:alpha}. 
\begin{figure}[t]
  \centering
  \begin{tikzpicture}
      \tikzstyle{edge} = [-,thick]

      \tikzstyle{vertex}=[circle,draw,fill=black,minimum
      size=6pt,inner sep=1pt,font=\footnotesize]

      \node[vertex] (1) at (0,0) {}; 
      \node[vertex] (3) at (0,2) {};
      \node[vertex] (4) at (1,0) {}; 
      \node[vertex,label=right:$\;c$] (5) at (1,1) {};
      \node[vertex,label=above:$d$] (6) at (1,2) {};
      \node[vertex] (7) at (2,0) {};
      \node[vertex] (9) at (2,2) {};

      \draw[edge] (1)--(4);
      \draw[edge] (1)--(5);       
      \draw[edge] (4)--(5);
      \draw[edge] (5)--(6);
      \draw[edge] (5)--(3);
      \draw[edge] (5)--(7);
      \draw[edge] (4)--(7);
      \draw[edge] (6)--(9);

      \path[fill=black,opacity=0.2,rounded corners] (-0.5,-0.2) -- (2.5,-0.2) -- (1,1.3) -- cycle;

    \end{tikzpicture}
    \caption{A graph~$H$ with $\alpha$-sequence~$(3,2,1)$. Deleting
      any non-cut-vertex~$v$ gives~$\alpha(H,v)=(6)$. Deleting the
      cut-vertex~$c$ or~$d$ gives~$\alpha(H,c)=(3,2,1)$
      and~$\alpha(H,d)=(5,1)$, respectively. The graph~$J(H)$ is highlighted by a
      gray background.}
  \label{fig:alpha}
\end{figure}
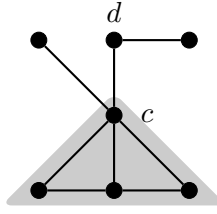
The idea in the reduction of Yannakakis is to choose a forbidden
subgraph that has a connected component with a lexicographically
smallest~$\alpha$-sequence as basis for the gadget.  More precisely,
the edges of the vertex cover instance are replaced by the largest
connected component remaining after deletion of a vertex~$c$ which
produces the smallest~$\alpha$-sequence of~$H$. To this end, let~$c\in
V(H)$ be a fixed vertex such that~$\alpha(H)=\alpha(H,c)$, and let~$J'$ be
a fixed largest connected component of~$H-c$. Then, let~$J(H):=H[V(J')\cup
\{c\}]$ denote the subgraph of~$H$ consisting of this component
plus~$c$. Since~$H$ has an edge,~$J(H)$ has at least two vertices. The
graph~$J(H)$ will be used to replace the edges of the \vc{} instance.

This terminology is sufficient to deal with the case that all
forbidden subgraphs are connected. To deal with disconnected forbidden
induced subgraphs, we introduce the notion
of~$\Gamma$-sequence.\footnote{Yannakakis uses the related notion
  of~$\beta$-sequence which does not break ties between nonisomorphic
  graphs that have the same~$\alpha$-sequence~\cite{LY80}. This
  tie-breaking, however, is necessary for our reduction. Guo and
  Shrestha~\cite{GS14} use~$\Omega$-sequences which also do not break
  ties between nonisomorphic graphs.} Observe that
the~$\alpha$-sequences imply a partial ordering on all graphs, but
there maybe nonisomorphic graphs that have the
same~$\alpha$-sequence. We want to avoid this, and thus the first step
is to refine this partial ordering to a total ordering by
breaking ties arbitrarily.
\begin{definition}
  A  function from the set of all connected graphs to~$\mathds{N}$ is a \emph{$\gamma$-ordering} if $\gamma(G_i)\ge \gamma(G_j)$ implies~$\alpha(G_i)\ge\alpha(G_j)$
  and~$\gamma(G_i)=\gamma(G_j)$ if and only if~$G_i$ and~$G_j$ are
  isomorphic.
\end{definition}
In the following, fix an arbitrary~$\gamma$-ordering of all graphs. To
obtain the~$\Gamma$-sequence we consider the sequence
of~$\gamma$-values created by the connected components of a graph~$H$.
\begin{definition}
  Let~$H$ be a graph with connected components~$H_1, \dots , H_\ell$
  such that~$\gamma(H_1)\ge \gamma(H_2) \ge \dots \ge
  \gamma(H_\ell)$. Then, the \emph{$\Gamma$-sequence} of~$H$
  is~$\Gamma(H):=(\gamma(H_1),\gamma(H_2), \dots ,\gamma(H_\ell))$.
\end{definition}
Now, the basis for our construction will be the minimal forbidden
induced subgraph of~$\Pi$ that has the lexicographically
smallest~$\Gamma$-sequence. For a graph property~$\Pi$, denote this
graph by~$H_\Pi$, an example is given in~Figure~\ref{fig:hpi}.
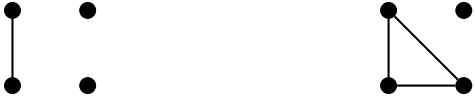
\begin{figure}[t]
  \centering
  \begin{tikzpicture}
      \tikzstyle{edge} = [-,thick]

      \tikzstyle{vertex}=[circle,draw,fill=black,minimum
      size=6pt,inner sep=1pt,font=\footnotesize]

      \node[vertex] (1) at (-1,0) {}; 
      \node[vertex] (2) at (-1,1) {};
      \node[vertex] (3) at (0,0) {};
      \node[vertex] (4) at (0,1) {};
      
      \draw[edge] (1)--(2);

      \node[vertex] (1c) at (4,0) {}; 
      \node[vertex] (2c) at (4,1) {};
      \node[vertex] (3c) at (5,0) {};
      \node[vertex] (4c) at (5,1) {};
      
      \draw[edge] (1c)--(2c);
      \draw[edge] (3c)--(2c);
      \draw[edge] (1c)--(3c);

    \end{tikzpicture}
    \caption{The two minimal forbidden subgraphs co-diamond (left) and
      co-claw (right) for~$\Pi$ being the
      complement graphs of line graphs of triangle-free graphs (see
      \url{http://www.graphclasses.org/}). Here,~$H_\Pi$ is the
      co-diamond since the lexicographically largest~$\alpha$-sequence of
      any component of the co-diamond is~$(1)$ and  the co-claw has a component with~$\alpha$-sequence~$(2)$.}
  \label{fig:hpi}
\end{figure}
We are now ready to prove the second statement of
Theorem~\ref{thm:main}. We reduce from~\sctvc{}. The construction of
the \pivd{} instance uses gadgets based on the graph~$H_\Pi$, an
example is presented in Figure~\ref{fig:construction}.
\begin{figure}[t]
  \centering
  \begin{tikzpicture}[yscale=0.8]
      \tikzstyle{edge} = [-,thick]

      \tikzstyle{vertex}=[circle,draw,fill=black,minimum
      size=6pt,inner sep=1pt,font=\footnotesize]

      \tikzstyle{vertexb}=[circle,draw,fill=gray,minimum
      size=6pt,inner sep=1pt,font=\footnotesize]

    \begin{scope}[shift={(-5,0)}]      
      \node[vertex] (1) at (0,0.5) {}; 
      \node[vertex] (2) at (1,0) {};
      \node[vertex] (3) at (1,1) {};
      \node[vertex] (4) at (2,1) {};
      \node[vertex] (5) at (2,0) {};
      \node[vertex] (6) at (3,0.5) {};
      \node[vertex] (7) at (4,0.5) {};
      \node[vertex] (4b) at (5,1) {};
      \node[vertex] (5b) at (5,0) {};
      \node[vertex,label=above:$c$] (6b) at (6,0.5) {};
      \node[vertex] (7b) at (7,0.5) {};
      
      \draw[edge] (2)--(3);
      \draw[edge] (4)--(5)--(6)--(4);      
      \draw[edge] (6)--(7);

      \draw[edge] (4b)--(5b)--(6b)--(4b);      
      \draw[edge] (6b)--(7b);
    \end{scope}
    \begin{scope}[shift={(5,0)}]      
      \node[vertexb] (1) at (0.3,0.5) {}; 
      \node[vertexb] (2) at (1,0) {};
      \node[vertexb] (3) at (1,1) {};
      \node[vertexb] (4) at (2,1) {};
      \node[vertexb] (5) at (2,0) {};

      \draw[edge] (1)--(2)--(3)--(4)--(5)--(2);
      \draw[edge] (1)--(3);
    \end{scope}
    \begin{scope}[shift={(0,-1.5)}]
      \foreach \x in {-2,...,7}
      \node[vertex] (1\x) at (-5+\x,-0.3) {}; 

      \foreach \x in {-2,...,7}{
      \node[vertex] (2\x) at (-5+\x,-1) {}; 
      \node[vertex] (3\x) at (-5+\x,-2) {}; 
      \draw[edge] (2\x)--(3\x);}

    \begin{scope}[shift={(2,-1.7)}]
      \node[vertex] (4) at (2,1) {};
      \node[vertex] (5) at (2,0) {};
      \node[vertex] (6) at (3,0.5) {};
      \node[vertex] (7) at (4,0.5) {};
      \draw[edge] (4)--(5)--(6)--(4);      
      \draw[edge] (6)--(7);
    \end{scope}

    \begin{scope}[shift={(-2,-6)}]
      \node[vertexb] (1) at (0.5,1) {}; 
      \node[vertexb] (2) at (2,0) {};
      \node[vertexb] (3) at (2,2) {}; 
      \node[vertexb] (4) at (4,2) {};
      \node[vertexb] (5) at (4,0) {};
      \node[vertex] (1b) at (-0.2,1) {}; 
      \node[vertex] (2b) at (2,0-0.7) {};
      \node[vertex] (3b) at (2,2+0.7) {}; 
      \node[vertex] (4b) at (4,2+0.7) {};
      \node[vertex] (5b) at (4,0-0.7) {};
      \foreach \x in {1,...,5}
      \draw[edge] (\x)--(\x b);

      \draw[edge] (1)--(2)--(3)--(4)--(5)--(2);
      \draw[edge] (1)--(3);

      \node[vertex] (12) at (0.9,0) {}; 
      \draw[edge] (1)-- (12)--(2); 

      \node[vertex] (13b) at (0.9,2) {}; 
      \draw[edge] (1)-- (13b)--(3); 

      \node[vertex] (23b) at (2.7,1) {}; 
      \draw[edge] (2)-- (23b)--(3); 
      
      \node[vertex] (34b) at (3,2.7) {}; 
      \draw[edge] (3)-- (34b)--(4); 
      \node[vertex] (45b) at (4.7,1) {}; 
      \draw[edge] (4)-- (45b)--(5); 
      \node[vertex] (52b) at (3,-0.7) {}; 
      \draw[edge] (5)-- (52b)--(2); 
     
    \end{scope}

    \end{scope}
  \end{tikzpicture}
  \caption{An example of the reduction (in this case from~\vc{}
    instead of~\scvc{}). Top left: the forbidden subgraph~$H_\Pi$ with
    the vertex~$c$ of~$H_1$; top right: the graph~$G$ of the \vc{}
    instance; bottom: the constructed graph~$G'$ of the~\pivd{}
    instance. Vertices of~$G$ and their corresponding vertices in~$G'$
    are shown in gray.}
  \label{fig:construction}
\end{figure}
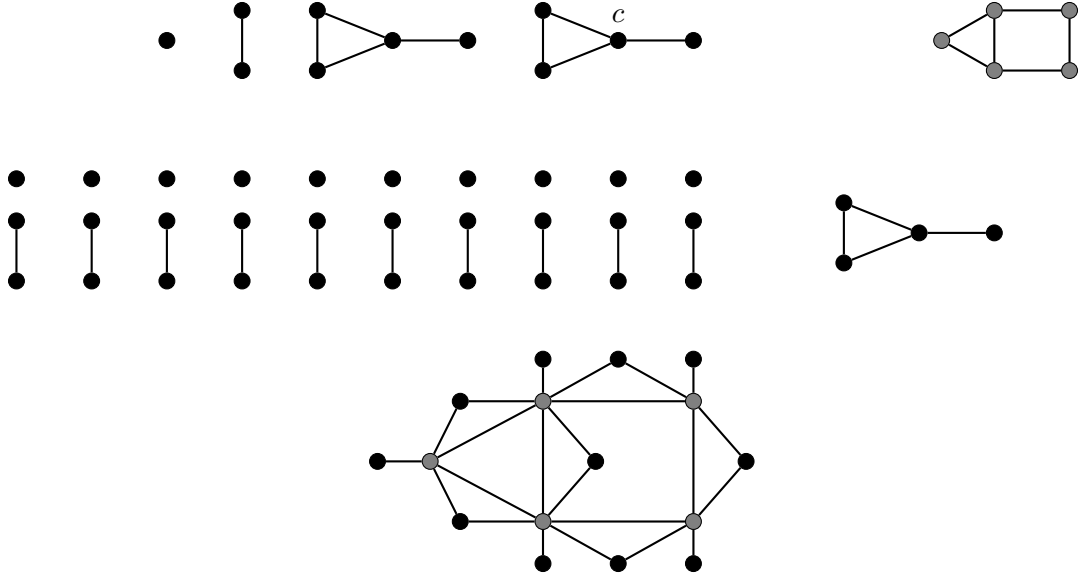

\begin{construction}
\label{cons:main}

Let~$(G=(V,E),k)$ be the given instance of~\sctvc{}. We build an instance
of~\pivd{} as follows. Let~$H_1$ be the connected component
of~$H_\Pi$ with maximum~$\gamma$-value and let~$d$ denote the number
of connected components of~$H_\Pi$ that are isomorphic
to~$H_1$. Moreover, let~$c$ denote a vertex of~$H_1$ such
that~$\alpha(H_1,c)=\alpha(H_1)$, that is,~$c$ is a vertex whose
deletion produces a lexicographically smallest sequence of connected
component sizes. Assume without loss of generality that~$|V|>|V(H_\Pi)|$ (otherwise, we can solve the \sctvc{}
instance in constant time).
  
  Starting from an empty graph~$G'$, construct the instance of~\pivd{}
  as follows. For each connected component~$H_i$ of~$H_\Pi$ that is
  not isomorphic to~$H_1$, add $2\cdot |V|$ disjoint copies of~$H_i$
  to~$G'$.\footnote{This is the part of the construction, where
    tie-breaking is necessary, as these graphs might have the
    same~$\alpha$-sequence as~$H_1$.} Then, add~$d-1$ disjoint copies
  of~$H_1$ to~$G'$. Call the graph constructed so far the \emph{base
    graph} of the construction.
  
  Now, add the graph~$G$ to~$G'$ and replace each edge~$\{u,v\}$
  of~$G$ by the graph~$J:=J(H_1)$ in the following way.\footnote{This
    part of the construction follows the construction of
    Yannakakis~\cite{LY80}.}  Let~$c'$ be a vertex from~$J$ that is
  different from~$c$. Remove the edge~$\{u,v\}$ from~$G'$, add a copy
  of~$J$ to~$G'$ and identify~$u$ with~$c$ and~$v$ with~$c'$. Now,
  let~$D:=H_1-(V(J)\setminus \{c\})$. For each vertex~$v\in V$, add a
  disjoint copy of~$D$ to~$G'$ and identify the vertex~$c$ of this
  copy with~$v$. Call the part of~$G'$ that is obtained from~$G$ the
  \emph{vc-extension} of the construction. Complete the construction
  of the \pivd{} instance by setting~$k':=k$.
\end{construction}
Using this construction, we can show our main running time lower
bound.
\begin{lemma}\label{lem:no-ind-set}
  Let~$\Pi$ be a hereditary graph property such that all independent
  sets are in~$\Pi$. Then, if the ETH is true,~\pivd{} cannot be
  solved in~$2^{o(n+m)}$ time.
\end{lemma}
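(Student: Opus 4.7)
The plan is to reduce from \sctvc{} via Construction~\ref{cons:main}, with the girth parameter of the source instance taken to be a sufficiently large constant depending only on $|V(H_\Pi)|$, and then to invoke Theorem~\ref{thm:scvc}. First I would verify the size bound: the base graph contributes $O(|V(G)|)$ vertices and edges (since $|V(H_\Pi)|$ is a constant depending only on $\Pi$), while the vc-extension adds one constant-size $J$-gadget per edge and one constant-size $D$-copy per vertex of $G$. Hence $|V(G')| + |E(G')| = O(|V(G)| + |E(G)|)$, so a hypothetical $2^{o(n+m)}$-time algorithm for \pivd{} on $G'$ would yield a $2^{o(n+m)}$-time algorithm for \sctvc{}, contradicting Theorem~\ref{thm:scvc}.

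The bulk of the work is to prove equivalence of $(G,k)$ and $(G',k)$. For the forward direction I would take a vertex cover $S$ of $G$ with $|S|\le k$, view it as a subset of $V(G')$, and argue $G' - S \in \Pi$. The construction is engineered so that the base graph has exactly $d-1$ copies of $H_1$ (one short of what $H_\Pi$ requires, where $d$ here is the multiplicity of $H_1$ inside $H_\Pi$) together with a large surplus of copies of each other component of $H_\Pi$; so any copy of $H_\Pi$ surviving in $G'-S$ would need one additional copy of $H_1$ arising from the vc-extension. Such a copy would have to be built from a $D$-copy rooted at some $v \notin S$ glued to a $J$-gadget still attached at $v$ via its $c$-side, but since $S$ is a vertex cover, each $J$-gadget has at least one of its two root vertices deleted, so no such configuration survives.

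For the backward direction, given $S'$ with $|S'|\le k$ and $G' - S' \in \Pi$, I would first normalize $S'$ by a pushing argument: any deletion of a non-root vertex inside a gadget (a vertex not identified with some $v \in V(G)$) can be replaced by a deletion at the corresponding root $v$ without increasing $|S'|$ and without reintroducing any forbidden subgraph. After normalization, $S := S' \cap V(G)$ has $|S| \le k$, and one shows $S$ is a vertex cover: if some edge $\{u,v\}$ were uncovered, then the $J$-gadget for $\{u,v\}$ together with the $D$-copy attached at $v$ would complete a fresh copy of $H_1$, which combined with the $d-1$ base copies and the surplus copies of the other components would form $H_\Pi$ inside $G' - S'$, contradicting $G' - S' \in \Pi$.

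The main obstacle will be ruling out \emph{unintended} minimal forbidden induced subgraphs of $\Pi$ in $G'$ that are not the designed copies of $H_\Pi$, or that span several gadgets in unexpected ways. This is exactly where the careful choices carried over from Yannakakis enter: the selection of $H_\Pi$ as a minimum-$\Gamma$-sequence forbidden subgraph, of $H_1$ as its maximum-$\gamma$ component, and of $c$ as a vertex realizing $\alpha(H_1)$ together guarantee that any forbidden subgraph embedded in $G'$ must live either inside the base graph (where it was counted) or across gadgets glued at a single vertex of $G$, in which case it corresponds to a bona fide copy of $H_\Pi$ handled by the analysis above. The large girth of the source graph $G$ prevents any constant-size forbidden subgraph from wrapping around a short cycle and thereby spanning multiple edge-gadgets through distinct paths, which is what forces this local structure. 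Finally, the hypothesis that $\Pi$ contains every independent set is used at the outset to conclude that $H_\Pi$ has an edge, so $H_1$ is nontrivial and $J(H_1)$ has at least two vertices, which is what lets the gadget encode an edge of $G$ at all.
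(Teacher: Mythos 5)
Your proposal matches the paper's construction (reduce from \textsc{\sctvc{}} via Construction~\ref{cons:main}) and your forward direction is essentially the paper's, though you hand-wave the crucial $\alpha$/$\gamma$-sequence analysis that shows a component of $G'-S$ containing a non-deleted vertex $v\in V$ has $\gamma$-value strictly smaller than $\gamma(H_1)$: it is not enough to observe that ``no $H_1$ survives,'' one must bound the $\alpha$-sequence of whatever \emph{does} survive (the $D$-copy at $v$ glued to truncated $J$-gadgets), and this is where the minimality of $H_\Pi$ and the choice of $c$ are actually invoked.

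The backward direction is where your approach genuinely diverges from the paper, and where it has a gap. You normalize $S'$ by ``pushing'' gadget-interior deletions to their roots in $V(G)$, set $S:=S'\cap V(G)$, and then argue that an uncovered edge $\{u,v\}$ would yield a fresh copy of $H_1$ that, together with the $d-1$ base copies, completes an $H_\Pi$. But $S'$ is free to delete vertices of the base graph; the pushing rule does not apply to those, and after such deletions fewer than $d-1$ base copies of $H_1$ remain, so a single uncovered edge in $G-S$ no longer produces a contradiction. The paper deals with exactly this slack: it constructs $S_A$ (similar in spirit to your pushed $S$), lets $q$ be the maximum matching size of $G-S_A$, and shows (i) $q\le d-1$ and $S'$ must have spent at least $q$ deletions in the base graph, so $|S_A|\le k-q$, and (ii) because $G$ has girth at least $3d>2q$, the graph $G-S_A$ is acyclic, hence has a vertex cover of size at most $q$, giving a size-$k$ vertex cover $S_A\cup S_B$ of $G$. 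This matching-plus-girth argument is the missing piece in your proof.

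Relatedly, your stated reason for requiring large girth --- that it stops a constant-size forbidden subgraph from ``wrapping around a short cycle and spanning multiple edge-gadgets'' --- is not the role girth plays here. Forbidden subgraphs are of constant size and the gadgets are already engineered so that no unintended copy spans several gadgets; the girth $3d$ is a purely combinatorial device ensuring that a subgraph of $G$ with matching number at most $d-1$ is a forest and therefore has a vertex cover no larger than its matching number, which is what turns the leftover budget into a small vertex cover of $G$.
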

\begin{proof}
  Let~$(G,k)$ be an instance of~\sctvc{} and let~$(G',k)$ be
  the~\pivd{} instance obtained from Construction~\ref{cons:main}. We
  first show the equivalence of the two instances, that is,
  \begin{quote}
    $G$ has a vertex cover of size at most~$k$
    $\Leftrightarrow$~$G'$ has a~$\Pi$-vertex deletion set of size at
    most~$k$.
  \end{quote}

  $\Rightarrow$: Let~$S$ be a size-$k$ vertex cover of~$G$. Then,~$S$
  is a~$\Pi$-vertex deletion set in~$G'$, that is,~$G'-S$
  fulfills~$\Pi$. To see this, first observe that each connected
  component~$C$ of the vc-extension of~$G'-S$ has~$\gamma$-value lower
  than~$\gamma(H_1)$: If~$C$ does not contain a vertex from~$V$, then
  it is either a proper induced subgraph of~$D$ or of~$J$. In both
  cases,~$C$ has a lexicographically lower~$\alpha$-sequence
  than~$H_1$ and thus lower~$\gamma$-value. If~$C$ contains a
  vertex~$v$ from~$V$, then~$v$ is a cut-vertex in~$C$, as deleting it
  disconnects the remainder of the copy of~$D$ containing~$v$. Now,
  observe that the other connected components of~$C-v$ have size at
  most~$|V(J)|-2$: These components are subgraphs of copies of~$J$
  corresponding to an edge~$\{u,v\}$ of $G$. Since~$S$ is a vertex
  cover,~$u$ is deleted from~$G$, cutting the rest of the graph from
  this copy of~$J$ and leaving only~$|V(J)-2|$ vertices in the
  component. Consequently, the~$\alpha$-sequence of~$C$ is lexicographically
  smaller than the~$\alpha$-sequence of~$H_1$. This implies that
  the~$\gamma$-value of the connected component~$C$ is smaller
  than~$\gamma(H_1)$.

  Moreover, the base part contains exactly~$d-1$ connected components
  with~$\gamma$-value~$\gamma(H_1)$. Hence,~$G'-S$ fulfills~$\Pi$: Any
  induced subgraph of~$G'-S$ has at most~$d-1$ connected components
  with~$\gamma$-value~$\gamma(H_1)$. By the assumption that~$H_\Pi$
  \begin{itemize}
  \item has exactly~$d$ connected components
    with~$\gamma$-value~$\gamma(H_1)$ and no connected components with larger~$\gamma$-value, and
  \item $H_\Pi$ has the lexicographically smallest~$\Gamma$-sequence among all
    forbidden induced subgraphs of~$\Pi$,
  \end{itemize}
  $G'-S$ cannot contain any forbidden induced subgraph of~$\Pi$ as
  induced subgraph.

  $\Leftarrow$: Let~$S'$ be a size-$k$~$\Pi$-vertex deletion set
  in~$G'$. Since~$k<2n$, we have that the number of copies of all
  connected components of~$H_\Pi$ with~$\gamma$-value smaller
  than~$\gamma(H_1)$ is at least as large in~$G'-S'$ as it is
  in~$H_\Pi$. Hence,~$G'-S'$ contains at most~$d-1$ vertex-disjoint copies
  of~$H_1$ as otherwise,~$G'-S'$ contains~$H_\Pi$ as induced
  subgraph. Then, create a
  set~$S_A\subseteq V$ as follows:
  \begin{itemize}
  \item For each vertex~$v\in V\cap S'$, add~$v$ to~$S_A$,
  \item for each copy~$D^*$ of~$D$ in~$G'$ such that~$S'$
    contains a vertex from~$V(D^*)\setminus
    V$, add the vertex~$v\in V(D^*)\cap
    V$ to~$S_A$,
  \item for each copy~$J^*$ of~$J$ such that~$S'\setminus V$ contains
    a vertex from~$V(J^*)\setminus V$ add an arbitrary vertex
    of~$V(J^*)\cap V$ to~$S_A$.
  \end{itemize}
  Observe that the set~$S_A$ has size at most~$k$. If~$G-S_A$ is an
  independent set, then~$G$ has a vertex cover of size at
  most~$k$. Otherwise, consider the graph~$G-S_A$ and observe that
  every edge~$\{u,v\}$ in~$G-S_A$ directly corresponds to an induced
  copy of~$H_1$ in~$G'-S'$: Since~$\{u,v\}$ is present in~$G-S_A$, we
  have that~$S'$ does not contain~$u$ or~$v$, does not contain
  vertices from the copies of~$D$ that are attached to~$u$ or~$v$, and
  does not contain a vertex from the copy of~$J$ attached to~$u$
  and~$v$. Now let~$q$ denote the size of a maximum matching in~$G-S_A$
  and observe that this implies that the vc-extension of~$G'-S'$
  has~$q$ vertex-disjoint copies of~$H_1$.
  % Now, let~$T$ 
  % Moreover,~$S$ is a vertex cover of~$G$: for each edge~$\{u,v\}$
  % in~$G$, assume without loss of generality that in the copy~$J^*$
  % of~$J$ containing~$u$ and~$v$,~$u$ is identified with the
  % vertex~$c$. Thus, the subgraph of~$G'$ induced by~$V(J^*)$ plus the
  % vertices of the copy~$D^*$ of~$D$ containing~$u$ is destroyed by the
  % deletion of a vertex in~$S'$. Thus,~$S'$ contains either~$u$ or~$v$,
  % or another vertex from~$J^*$, or a vertex from~$D^*$. In each
  % case,~$S$ contains either~$u$ or~$v$ by construction.
  Since~$G'-S'$ contains at most~$d-1$ vertex-disjoint copies
  of~$H_1$, this implies that~$q\le d-1$ and that~$S'$ contains at
  least~$q$ vertex deletions in the base graph (it needs to destroy at
  least~$q$ of the vertex-disjoint copies of~$H_1$ in the base
  graph). Thus,~$|S_A|\le k-q$. Finally, observe that since~$G-S_A$
  has a maximum matching of size~$q\le d-1$ it cannot contain a cycle
  as all cycles in~$G$ have length at least~$3d$. Thus,~$G-S_A$ has a
  vertex cover~$S_B$ of size at most~$q$ which together with~$S_A$ is
  a size-$k$ vertex cover of~$G$.

  Since the two instances are equivalent, any algorithm
  deciding~$(G',k')$ decides~$(G,k)$. Since~$H_\Pi$ has constant size for
  each fixed graph property~$\Pi$, the base graph contains~$O(|V(G)|)$
  vertices and edges. Similarly, the vc-extension of~$G$ has~$O(|V(G)|)$
  copies of~$J$ (since~$G$ has~$O(|V(G)|)$ edges) and~$O(|V(G)|)$ copies of~$D$
  and no further vertices. Consequently,~$G'$ has~$O(|V(G)|)$ edges and~$O(|V(G)|)$
  edges. Any algorithm deciding~$(G',k)$ in~$2^{o(|V(G')|+|E(G')|)}$ time thus
  decides~\sctvc{} in~$2^{o(|V(G)|)}$~time, contradicting the ETH by Theorem~\ref{thm:scvc}.
  % Thus, we can assume without loss of generality that the vc-extension
  % of~$G^*-S$ is~$H_1$-free. Now consider the vertex set~$S'$ of~$V$
  % obtained by taking for each~$c$-vertex of~$S$ its twin of~$V$
  % into~$S'$, and for each vertex that is in a non-$J$-part of~$H_1$,
  % its corresponding~$c$-vertex and taking for each deleted vertex in
  % a~$J$-part of~$H_1$ the corresponding~$c$-vertex of~$J$. This set
  % has size~$k$ and is a vertex cover of~$G$: Consider some
  % edge~$\{u,v\}$ in~$G$. Then, there is a copy of~$H_1$ that contains,
  % without loss of generality, the twin of~$u$ as~$c$-vertex and the
  % twin of~$v$ has~$c'$ vertex. This copy is destroyed by either
  % deleting one of the twins of~$u$ or~$v$, or deleting a vertex of
  % the~$J$-part or deleting a non-$J$-part vertex that is attached to
  % the twin of~$u$. In each case, either~$u$ or~$v$ is in the set~$S'$.
\end{proof}
Let us briefly discuss why we need to assume large girth for the \vc{}
instance. If~$\Pi$ is for example the property of being~$2K_2$-free,
that is, not containing an induced matching of size two, then given a
\vc{} instance~$(G,k)$, the reduction would simply add an edge
to~$G$. Now, if~$G$ is a triangle and~$k=1$, then we have a
no-instance but the resulting~\pivd{} instance is a yes-instance:
deleting the isolated edge is sufficient to destroy all~$2K_2$s.

Since the maximum degree is three in~\scvc{}, we obtain the
following corollary on the hardness of~\pivd{} in bounded-degree
graphs.
\begin{corollary}
  \label{cor:degree}
  Let~$\Pi$ be a hereditary nontrivial graph property such that all independent
  sets are in~$\Pi$. Let~$\Delta$ be the smallest number such that
  there exists a~$\gamma$-ordering of the forbidden subgraphs of~$\Pi$
  such that a forbidden subgraph with lexicographically
  smallest~$\Gamma$-sequence has maximum
  degree~$\Delta$. Then,~\pivd{} cannot be solved in~$2^{o(n+m)}$ time
  even if~$G$ has maximum degree~$3\Delta$.
\end{corollary}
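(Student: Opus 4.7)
The plan is to refine Construction~\ref{cons:main} so that the resulting graph $G'$ has maximum degree at most $3\Delta$, while preserving the correctness argument of Lemma~\ref{lem:no-ind-set}. To start, I would fix the $\gamma$-ordering granted by the corollary's hypothesis, so that the subgraph $H_\Pi$ used in the construction has maximum degree $\Delta$; in particular, every vertex of $H_1$, $J$, and $D$ has degree at most $\Delta$ in its respective subgraph.

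Next I would identify which vertices of $G'$ could have degree exceeding $\Delta$. Every vertex outside $V \subseteq V(G')$ lies in the base graph or in the interior of a single copy of $D$ or $J$, so its degree in $G'$ equals its degree in that fixed-size graph and is trivially bounded by $\Delta$. The only vertices requiring attention are those in $V$. Each such $v$ contributes $\deg_D(c)$ neighbors from its attached copy of $D$, plus either $\deg_J(c)$ or $\deg_J(c')$ neighbors from each copy of $J$ corresponding to an edge of $G$ incident with $v$, depending on whether $v$ plays the role of $c$ or of the fixed vertex $c'\in V(J)\setminus\{c\}$ in that copy.

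The main idea is to orient the edges of $G$ before applying the construction, then identify the tail of each oriented edge with $c$ and the head with $c'$. Since $G$ is subcubic, a standard balanced orientation (for example, pair up the odd-degree vertices with auxiliary edges, traverse an Eulerian circuit of the resulting even multigraph, and orient along the traversal) produces an orientation in which $d^+(v), d^-(v) \le 2$ at every vertex. This reorientation does not affect the reduction's correctness, since the proof of Lemma~\ref{lem:no-ind-set} is symmetric in the two endpoints of each edge. Writing $a := \deg_J(c)$, $b := \deg_J(c')$, and $\delta := \deg_D(c)$, and using $\delta + a = \deg_{H_1}(c) \le \Delta$ together with $a, b \le \Delta$, a short case analysis on whether $a \ge b$ or $a < b$ then yields
\[
\deg_{G'}(v) \le \delta + \max(2a + b,\, a + 2b) \le 3\Delta.
\]
The only mild obstacle I foresee is verifying that the orientation step does not interfere with the symmetry assumptions used in the original correctness proof; beyond that, the argument reduces to a one-line orientation lemma and the displayed computation, after which the hardness conclusion follows immediately from Lemma~\ref{lem:no-ind-set}.
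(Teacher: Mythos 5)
Your proof is correct and is the natural expansion of the paper's one-line justification ("Since the maximum degree is three in \scvc{}\dots"). You have also correctly identified a detail the paper glosses over: Construction~\ref{cons:main} does not specify which endpoint of each edge $\{u,v\}$ of the \scvc{} instance is identified with $c$ and which with $c'$, and for a bad choice (a degree-$3$ vertex identified with $c'$ on all three incident edges) the resulting degree is $\deg_D(c)+3\deg_J(c')$, which can reach $(\Delta-1)+3\Delta > 3\Delta$. Your balanced-orientation step, ensuring $d^+(v),d^-(v)\le 2$ so that no vertex plays the same role three times, is exactly what rescues the $3\Delta$ bound, and the displayed estimate $\delta+\max(2a+b,\,a+2b)\le 3\Delta$ using $\delta+a=\deg_{H_1}(c)\le\Delta$ and $a,b\le\Delta$ is right; the symmetry of the correctness proof in the two endpoints also holds as you claim. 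One small nit: the relevant dichotomy is the in/out split at $v$ (whether $v$ is the tail once or twice under the fixed orientation), not "whether $a\ge b$"; but since your displayed bound already covers both splits, this phrasing issue does not affect the argument.
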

So, for example if~$\Pi$ is the property of being acyclic,
then~\pivd{} (this special case is known as \textsc{Feedback Vertex
  Set}) does not admit a~$2^{o(n+m)}$-time algorithm even if~$G$ has
maximum degree six.

We can directly use Lemma~\ref{lem:no-ind-set} to obtain a lower bound
for all other nontrivial graph properties. First observe that, by
Ramsey's theorem, every nontrivial graph property~$\Pi$ contains
either all independent sets or all cliques: Since~$\Pi$ is nontrivial,
the order of the graphs in~$\Pi$ is unbounded. Thus, for every
number~$n$,~$\Pi$ contains a graph which has~$R_{n,n}$ many nodes and
therefore contains either an $n$-vertex clique or an~$n$-vertex
independent set. Thus, if~$\Pi$ does not contain all independent sets,
then it contains all cliques. In this case, however,
Lemma~\ref{lem:no-ind-set} implies a running time lower bound
for~$\bar{\Pi}$-\textsc{Vertex Deletion}, where~$\bar{\Pi}$ is the
graph property containing exactly the graphs that are complement
graphs of a graph in~$\Pi$. Now~$\bar{\Pi}$-\textsc{Vertex Deletion}
problem can be easily reduced to~\pivd{} by complementing the input
graph. This reduction does not change the number of vertices~$n$ in the
graph and the number of edges~$m$ is~$O(n^2)$,
thus leading to the following lower bound for all hereditary graph
properties.
\begin{corollary}
  Let~$\Pi$ be a hereditary graph property. Then, if the ETH is
  true,~\pivd{} cannot be solved in~$2^{o(n+\sqrt{m})}$ time.
\end{corollary}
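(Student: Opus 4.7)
The plan is to split along the dichotomy already suggested in the paragraph above the corollary: either $\Pi$ contains all independent sets, or it does not. If $\Pi$ contains all independent sets, I would invoke Lemma~\ref{lem:no-ind-set} directly. Because $n+\sqrt{m}\le n+m$, any $2^{o(n+\sqrt{m})}$-time algorithm is in particular a $2^{o(n+m)}$-time algorithm, so ruling out the latter rules out the former, and this case is immediate.

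For the harder case, the first step is to establish via Ramsey's theorem that $\Pi$ contains all cliques. Since $\Pi$ is nontrivial it contains graphs of unboundedly large order, and any graph on at least $R_{n,n}$ vertices contains $K_n$ or $\overline{K_n}$ as an induced subgraph; by hereditariness, for every $n$ either $K_n\in\Pi$ or $\overline{K_n}\in\Pi$. By our case assumption $\overline{K_n}\notin\Pi$ for some fixed $n$, so $\overline{K_m}\notin\Pi$ for all sufficiently large $m$ (by hereditariness, failure of one $\overline{K_n}$ forces failure of all larger ones), which forces $K_m\in\Pi$ for all sufficiently large $m$, and hence for all $m$ by hereditariness.

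Next I would define $\bar{\Pi}:=\{\bar{G}:G\in\Pi\}$ and verify two facts: $\bar{\Pi}$ is hereditary (because vertex deletion commutes with complementation, so $G-v\in\bar{\Pi}$ whenever $G\in\bar{\Pi}$), and $\bar{\Pi}$ contains all independent sets (because $\Pi$ contains all cliques). Lemma~\ref{lem:no-ind-set} then applies to $\bar{\Pi}$ and yields that $\bar{\Pi}$-\textsc{Vertex Deletion} admits no $2^{o(n+m)}$-time algorithm under ETH. Observe that a $2^{o(n)}$ upper bound in particular contradicts this lower bound, since $o(n)\subseteq o(n+m)$.

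Finally, I would give the trivial reduction: $(G,k)$ is a yes-instance of $\bar{\Pi}$-\textsc{Vertex Deletion} if and only if $(\bar{G},k)$ is a yes-instance of \pivd, since $(\bar{G})[V\setminus S]=\overline{G[V\setminus S]}$. The complement graph has the same $n$ vertices but at most $\binom{n}{2}=O(n^2)$ edges, so its edge count $m'$ satisfies $\sqrt{m'}=O(n)$. Thus a hypothetical $2^{o(n+\sqrt{m'})}$-time algorithm for \pivd\ would solve $\bar{\Pi}$-\textsc{Vertex Deletion} in $2^{o(n)}$ time, contradicting Lemma~\ref{lem:no-ind-set} (applied to $\bar{\Pi}$). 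There is no substantive technical obstacle here since all ingredients are in place; the only care needed is in the Ramsey dichotomy, where one must combine the "for each $n$" conclusion with hereditariness and the case assumption to promote it to "all cliques," and in confirming that the complement class $\bar{\Pi}$ inherits hereditariness.
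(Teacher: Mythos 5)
Your proposal is correct and follows essentially the same route as the paper: the same case split on whether $\Pi$ contains all independent sets, the same Ramsey-theoretic argument to deduce that $\Pi$ must otherwise contain all cliques, and the same complementation reduction from $\bar\Pi$-\textsc{Vertex Deletion} with the $m=O(n^2)$ accounting that yields $2^{o(n+\sqrt{m})}\subseteq 2^{o(n)}$. Your explicit check that $\bar\Pi$ inherits hereditariness is a small detail the paper leaves implicit but which is indeed needed for Lemma~\ref{lem:no-ind-set} to apply.
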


\section{Subexponential-Time Algorithms for Graph Properties Excluding
  Some Independent Set}
\label{sec:subexp}
The hardness results from the previous section leave open the
possibility of algorithms that have subexponential running time with
respect to the number of edges~$m$ of~$G$ for graph properties~$\Pi$
that do not contain all independent sets.  We now show that~\pivd{}
can be solved in~$2^{o(m)}+O(n)$ time if~$\Pi$ can be recognized
sufficiently fast and does not contain all independent
sets. Intuitively, the algorithm exploits the following observations:
First, the number of vertices that have a high degree in~$G$ is
sublinear in~$m$.  Second, the subgraph induced by the low-degree
vertices in the solution has a small dominating set.
\begin{lemma}\label{lem:subexp}
  Let~$\Pi$ be a hereditary graph property such that for some fixed~$d$, $\Pi$
  does not contain the~$d$-vertex independent set, then
    \begin{itemize}
    \item \pivd{} can be solved in~$2^{O(\sqrt{m})} + O(n)$ time if
      and only if membership in~$\Pi$ can be verified in~$2^{O(n)}$
      time, and
    \item \pivd{} can be solved in~$2^{o(m)} + O(n)$ time if and only
      if membership in~$\Pi$ can be verified in~$2^{o(m)}$ time.
    \end{itemize}
\end{lemma}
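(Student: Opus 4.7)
The plan is to dispose of both ``only if'' directions by the trivial observation that setting $k = 0$ in \pivd{} reduces exactly to $\Pi$-membership; a $2^{O(\sqrt{m})} + O(n)$ algorithm then yields a $2^{O(n)}$ membership test since $\sqrt{m} \leq n$, and similarly $2^{o(m)} + O(n)$ gives $2^{o(m)}$ once the additive $O(n)$ term is charged to input reading. The substantive work lies in the two ``if'' directions, which I would establish with a single algorithm whose cost tracks that of the supplied $\Pi$-membership oracle.

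The structural starting point is that, since $\Pi$ is hereditary and excludes the $d$-vertex independent set, any candidate $R := V \setminus S$ with $G[R] \in \Pi$ satisfies $\alpha(G[R]) \leq d - 1$. In particular, $G[R]$ contains at most $d - 1$ vertices that are isolated within it. Applying Tur\'an's theorem to the complement of the non-isolated part $R^*$ of $G[R]$ (which is $K_d$-free as its independence number is at most $d - 1$) shows that $R^*$ has $\Omega(|R^*|^2 / d)$ edges; comparing with the trivial upper bound $m$ on $|E(G[R])|$ yields $|R^*| = O(\sqrt{m})$, and hence $|R| = O(\sqrt{m})$.

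The algorithm first handles isolated vertices of $G$: since they are pairwise interchangeable, it branches on the number $j \in \{0, 1, \dots, d - 1\}$ of them kept in $R$, at a cost of a constant factor, after which the instance has $n' \leq 2m + d - 1$ vertices (each non-isolated vertex has degree at least $1$, and the sum of degrees equals $2m$). Fix the threshold $\tau := \sqrt{m}$ and split the remaining vertices into $V_{\mathrm{high}}$ of degree at least $\tau$ (with $|V_{\mathrm{high}}| \leq 2\sqrt{m}$) and $V_{\mathrm{low}}$. I would then enumerate $R \cap V_{\mathrm{high}}$ among the $2^{2\sqrt{m}}$ subsets of $V_{\mathrm{high}}$. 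To enumerate $R \cap V_{\mathrm{low}}$, observe that any maximal independent set of $G[R \cap V_{\mathrm{low}}]$ is a dominating set of size at most $d - 1$; guess such a dominating set $D_\ell \subseteq V_{\mathrm{low}}$ at cost $(n')^{d - 1} = \mathrm{poly}(m)$, and note that $R \cap V_{\mathrm{low}} \subseteq N_G[D_\ell]$, a set of size at most $(d - 1)(1 + \tau) = O(\sqrt{m})$ since every vertex of $D_\ell$ has degree below $\tau$. Enumerating its subsets contributes a further $2^{O(\sqrt{m})}$ factor. For each resulting candidate $R$, invoke the $\Pi$-membership oracle on $G[R]$, paying $2^{O(|R|)} = 2^{O(\sqrt{m})}$ in the first bullet or $2^{o(|E(G[R])|)} \leq 2^{o(m)}$ in the second, and accept if moreover $|V \setminus R| \leq k$.

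The total running time is the $2^{O(\sqrt{m})}$ enumeration multiplied by the oracle cost, fitting into $2^{O(\sqrt{m})} + O(n)$ and $2^{o(m)} + O(n)$ respectively, where the additive $O(n)$ handles reading the input and the isolated-vertex reduction. The main obstacle I expect is balancing the polynomial overhead $(n')^{d - 1}$ against the $2^{O(\sqrt{m})}$ budget: this overhead is only negligible once $n'$ has been reduced to $\mathrm{poly}(m)$ by the isolated-vertex preprocessing, so that step is crucial. A secondary subtlety is that vertices isolated in $G[R]$ need not be isolated in $G$; they are nevertheless absorbed transparently by the dominating-set guess, since any such vertex must belong to $D_\ell$, and no separate treatment is required beyond the uniform enumeration above.
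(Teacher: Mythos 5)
Your proposal is correct and follows essentially the same strategy as the paper: handle isolated vertices as a preprocessing step, split the remaining vertices by a degree threshold of order $\sqrt{m}$, enumerate subsets of the few high-degree vertices, guess a maximal independent (hence dominating) set of size $<d$ among the low-degree vertices kept, and enumerate subsets of its small closed neighborhood before invoking the $\Pi$-membership oracle on the $O(\sqrt{m})$-vertex candidate. The Tur\'an-based bound $|R|=O(\sqrt{m})$ you derive up front is a nice sanity check but is not needed, since the same bound already falls out of the enumeration itself.
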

\begin{proof}
  We first show an algorithm with the claimed running time for the
  case that membership in~$\Pi$ can be verified in~$2^{O(n)}$
  time. Given an input~$(G,k)$, the first step of the algorithm is to
  delete all except at most~$d-1$ singletons of the input graph and to
  decrease the size bound accordingly. These deletions are necessary
  by the assumption that~$\Pi$ does not contain the~$d$-vertex
  independent set. Afterwards, the graph has $O(m)$~vertices.
  
  Now, let~$A$ denote an arbitrary but fixed maximum-cardinality set
  such that~$G[A]\in \Pi$. Let~$V_h$ denote the vertices in~$G$ that
  have degree at least~$2\sqrt{m}$ and observe that~$|V_h|\le
  \sqrt{m}$. For each~$A_h\subseteq V_h$, branch
  into a case that assumes~$A_h=A\cap V_h$. In one of these branches,
  the assumption is correct. 

  Now consider the vertices in~$V_\ell:=V\setminus V_h$ that have
  degree at most~$2 \sqrt{m}$ in~$G$. Since~$G[A\cap V_\ell]\in \Pi$
  it has no independent set on~$d$ vertices. Thus, for each
  independent set~$I_\ell\subseteq V_\ell$ of~$G[V_\ell]$ such
  that~$|I_\ell|< d$ we branch into a case that assumes
  that~$I_\ell\subseteq A\cap V_\ell$ is a maximal independent set
  of~$G[A\cap V_\ell]$. In one of these branches, the assumption is
  correct. Now observe that since~$I_\ell$ is a \emph{maximal}
  independent set in~$G[A\cap V_\ell]$ it is also a dominating set
  in~$G[A\cap V_\ell]$. That is,~$A\cap V_\ell\subseteq
  N[I_\ell]$. Thus, by branching for each~$A_\ell\subseteq N(I_\ell)$
  into the case that~$A_\ell=(A\cap V_\ell)\setminus I_\ell$ we obtain
  one case where~$A=A_h\cup I_\ell\cup A_\ell$. In each branch, we check
  whether~$G[A_h\cup I_\ell\cup A_\ell]\in \Pi$ and whether~$|A_h\cup
  I_\ell\cup A_\ell|\ge n-k$.
  
  The running time bound can be obtained as follows: The number of
  subsets of~$V_h$ is~$2^{O(\sqrt{m})}$. The number of
  subsets~$I_\ell$ of size at most~$d$ of~$V_\ell$ is
  $O(n^{d-1})=n^{O(1)}$, and for each~$I_\ell$ the number of subsets
  of~$N(I_\ell)$ is~$O(2^{(d-1)2\sqrt{m}})=2^{O(\sqrt{m})}$. Thus, the
  recognition algorithm for~$\Pi$ has to be invoked~$2^{O(\sqrt{m})}$
  times. Each time, it is invoked on a graph with~$O(\sqrt{m})$
  vertices since~$|A_h|\le \sqrt{m}$,~$|I_\ell|< d=O(1)$, and~$|A_\ell|
  < d\sqrt{m} =O(\sqrt{m})$. By the premise, this algorithm
  takes~$2^{O(\sqrt{m})}$ time, resulting in the claimed overall
  running time.

  For the only if part of the first claim, we observe the following.
  Since~$m=O(n^2)$, any~$2^{O(\sqrt{m})}$-time algorithm for~\pivd{}
  directly gives a~$2^{O(n)}$-time algorithm for the recognition
  problem, since the special case~$k=0$ is the recognition problem
  for~$\Pi$.

  If membership in~$\Pi$ can be verified in~$2^{o(m)}$ time, then the
  algorithm only differs in the invoked recognition algorithm. This
  algorithm for~$\Pi$ has to be invoked~$2^{O(\sqrt{m})}=2^{o(m)}$ times, each time on a graph with at most~$m$
  edges. By the premise, this algorithm takes~$2^{o(m)}$ time,
  resulting in the claimed overall running time. The only if part of
  the statement follows directly from the fact that solving the
  special case~$k=0$ gives an algorithm for the recognition problem.
\end{proof}
Since the time complexity of~\pivd{} for the nontrivial graph
properties excluding a fixed independent is settled to
be~$2^{O(\sqrt{m})}$ or more, it is now motivated to decrease the
constants in the exponents of the running time bound. In the
following, we consider the important special case of graph properties
where membership can be decided in polynomial time. The generic
algorithm described above has running time~$2^{\sqrt{m}}\cdot
n^{d-1}\cdot 2^{(d-1)2\sqrt{m}}\cdot n^{O(1)}$ for these problems. By
setting the degree threshold for including a vertex in the first or in
the second branching to~$\sqrt{2m/(d-1)}$, this can be improved to a
running time of~$2^{2\sqrt{2(d-1)m}}n^{d + O(1)}$. It is more
interesting to determine whether the factor~$d$ before~$m$ can be
removed. For a subclass of the polynomial-time decidable graph
properties excluding an independent set of size~$d$, we obtain such a
running time bound. The algorithm achieving the running time follows
the same idea as a known~$2^{O(\sqrt{m})}$-time algorithm for
\textsc{Clique}~\cite{FK10}.
\begin{theorem}\label{thm:degree-subexp}
  Let~$\Pi$ be a graph property such that
  \begin{itemize}
  \item all~$n$-vertex graphs~$G$ with property~$\Pi$ have minimum degree~$n-d$ for some constant~$d$, and 
  \item for an~$n$-vertex graph, membership in~$\Pi$ can be verified
    in~$n^{O(1)}$ time.
  \end{itemize}
  Then,~\textsc{$\Pi$-Vertex Deletion} can be solved
  in~$2^{\sqrt{2m}}\cdot n^{d+O(1)}$ time.
\end{theorem}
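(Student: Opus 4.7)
My plan is to mirror the known $2^{O(\sqrt{m})}$-time algorithm for \textsc{Clique} that exploits a degeneracy ordering of~$G$. The key structural observation is that the minimum-degree hypothesis makes any kept set $A := V \setminus S$ with $G[A] \in \Pi$ very small: since $G[A]$ has minimum degree at least $|A| - d$, the handshake lemma gives $|A|(|A| - d) \le 2|E(G[A])| \le 2m$, so $|A| \le d + \sqrt{2m}$. Hence if $n - k > d + \sqrt{2m}$, the instance is immediately a no-instance and we are done.

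Otherwise, I would compute a degeneracy ordering $v_1, \ldots, v_n$ of~$G$ in linear time. Because any subgraph~$H$ with minimum degree~$\delta$ satisfies $\delta^2 \le \delta \cdot |V(H)| \le 2|E(H)| \le 2m$ and hence $\delta \le \sqrt{2m}$, the degeneracy of~$G$ is at most~$\sqrt{2m}$, and in this ordering every~$v_i$ has at most~$\sqrt{2m}$ neighbors in $\{v_{i+1}, \ldots, v_n\}$. Now fix any candidate solution~$A$ and let~$v_i$ be its smallest-indexed element. Then $A \setminus \{v_i\}$ splits into the non-neighbors~$X$ of~$v_i$ lying in~$A$ (satisfying $|X| \le d - 1$ by the minimum-degree condition applied to $v_i$ in~$G[A]$) and the neighbors $Y := A \cap N(v_i)$, all of which lie later than~$v_i$ in the ordering and hence form a subset of a set of size at most~$\sqrt{2m}$.

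The algorithm iterates over all~$n$ choices of~$v_i$, all $O(n^{d-1})$ subsets $X \subseteq \{v_{i+1}, \ldots, v_n\} \setminus N(v_i)$ with $|X| \le d - 1$, and all $2^{\sqrt{2m}}$ subsets~$Y$ of $N(v_i) \cap \{v_{i+1}, \ldots, v_n\}$; for each triple it assembles $A := \{v_i\} \cup X \cup Y$ and, when $|A| \ge n - k$, invokes the assumed polynomial-time membership test for~$\Pi$. Correctness is immediate because any true solution appears in exactly the branch determined by its smallest-indexed vertex together with the induced partition of the rest into~$X$ and~$Y$. The total running time is $n \cdot n^{d-1} \cdot 2^{\sqrt{2m}} \cdot n^{O(1)} = 2^{\sqrt{2m}} \cdot n^{d + O(1)}$, as claimed. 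The main technical content lies in establishing the two structural bounds above---$|A| \le d + \sqrt{2m}$ and degeneracy at most~$\sqrt{2m}$---which together ensure that the branching on the first vertex of~$A$ uses only $n^{d-1}$ options for the non-neighbor part and~$2^{\sqrt{2m}}$ for the neighbor part; the remaining enumeration and running-time calculation are routine.
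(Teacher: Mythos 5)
Your proof is correct and proves the theorem, but it takes a genuinely different (and arguably cleaner) route than the paper. The paper branches on a vertex $v\in A$ that has degree at most $\sqrt{2m}$ in the whole graph~$G$: it enumerates $A\cap N(v)$ and $A\setminus N[v]$ exactly as you do, but then it needs a separate second phase for the case that $A$ contains no low-degree vertex, where it discards all low-degree vertices and brute-forces on the at most $\sqrt{2m}$ remaining high-degree vertices. You instead fix a degeneracy ordering of~$G$, branch on the smallest-indexed vertex $v_i$ of~$A$, and observe that $A\cap N(v_i)$ is forced to lie among the at most $\sqrt{2m}$ forward-neighbors of~$v_i$. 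This makes the bounded-neighborhood property hold automatically for every choice of branch vertex, so the separate high-degree case disappears entirely; the ``degeneracy at most $\sqrt{2m}$'' lemma does the work of the paper's case distinction. Your side observation that $|A|\le d+\sqrt{2m}$ is nice but not logically needed for the algorithm, and you implicitly rely on the (trivial) special case $A=\emptyset$ (i.e., $k\ge n$), which you might want to mention for completeness; neither affects correctness or the running-time bound $2^{\sqrt{2m}}\cdot n^{d+O(1)}$, which you establish just as the paper does.
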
 
\appendixproof{Theorem~\ref{thm:degree-subexp}}{
\begin{proof}
  Let~$A$ denote an arbitrary but fixed maximum-cardinality set
  such that~$G[A]\in \Pi$. First, consider the case that~$A$ contains
  a vertex~$v$ that has degree at most~$\sqrt{2m}$ in~$G$. Then,~for
  each~$A'\subseteq N(v)$ branch into a case assuming that~$A'=A\cap
  N(v)$. In each of these cases, branch for each subset~$A_d$ of size
  at most~$d$ of~$V\setminus N[v]$ into a case
  assuming~$A_d=A\setminus N[v]$. Now if~$|A'\cup A_d\cup \{v\}|\ge
  n-k$, then check whether~$G[A'\cup A_d\cup \{v\}]$ fulfills~$\Pi$
  using the algorithm promised by the premise of the lemma. If this is the case,  then return ``yes''. The overall number of branches is~$2^{\sqrt{2m}}\cdot n^d$, resulting in a running time of~$2^{\sqrt{2m}}\cdot n^{d+O(1)}$ for this part of the algorithm. If~$A$ contains~$v$, then in one of the created
  branches both assumptions are correct. Hence, if all of the branches
  for all~$v$ with degree at most~$\sqrt{2m}$ return ``no'', then
  $A$~contains only vertices of degree at least~$\sqrt{2m}+1$. Thus, all
  other vertices may be removed from the graph while decreasing the size
  bound~$k$ accordingly. 

  In the remaining graph,~$n < \sqrt{2m}$, thus a brute-force
  algorithm testing for each vertex subset~$A$ of size at least~$n-k$
  whether~$G[A]$ fulfills~$\Pi$ has running
  time~$2^{\sqrt{2m}}\cdot n^{O(1)}$.
\end{proof}
}

\section{Consequences for Some Restricted Domains of Input Graphs}
\label{sec:restricted}
We now present running time lower bounds for \pivd{} when the input
graph is restricted to belong to a certain graph class. Such
restrictions are motivated by at least two aspects: From an application
viewpoint, one should provide hardness results for realistic types
of input instances, for example for sparse graphs. From a
complexity-theoretic viewpoint, hardness results for restricted inputs
may facilitate further reductions.

\subsection{Planar Graphs}
\label{sec:planar}
\decprob{\ppivd}{An undirected planar graph~$G=(V,E)$ and an
  integer~$k$.}{Is there a set~$S\subseteq V$ such that~$|S|\le k$
  and~$G[V\setminus S]$ is contained in~$\Pi$?}  In \ppivd, we are
only interested in properties~$\Pi$ that contain all independent sets:
Otherwise, for some fixed~$d\ge 5$ (depending on~$\Pi$), the
property~$\Pi$ cannot contain planar graphs of order at
least~$R_{d,d}$ since these graphs cannot contain an independent set
of size~$d$ and thus contain a~$K_5$.

For graph properties~$\Pi$ containing all independent sets and
excluding at least one planar graph, we can apply our modification of
Yannakakis' reduction when we reduce from \pvc{} instead: First, as
noted for example in~\cite[Theorem 14.9]{CFK+15} it is known that
\textsc{Planar Vertex Cover} cannot be solved in~$2^{o(\sqrt{n})}$
time (assuming the ETH).\footnote{The result follows essentially from
  a classic reduction~\cite{GJS76} of \vc{} to \pvc{} that uses
  constant-size uncrossing gadgets.} Second, the reduction behind
Theorem~\ref{thm:scvc} produces a planar graph if it has a planar
graph as input. Third, if~$\Pi$ is hereditary and excludes some planar
graph, then it has a nonempty family of planar forbidden induced
subgraphs~${\cal F}$. Finally, in a reduction producing a planar
graph, we may ignore all nonplanar forbidden induced subgraphs. Thus,
we can simply carry out the reduction behind
Lemma~\ref{lem:no-ind-set} starting from \pvc{} and picking the gadget
graph~$H_\Pi$ only among the planar forbidden induced subgraphs
of~$\Pi$. The replacement of edges of the \pvc{} instance by planar
graphs and the attachment of planar graphs to single vertices clearly
yields a planar graph, as noted by Yannakakis~\cite{LY80}. Moreover,
if the \pvc{} instance has~$O(n)$ vertices, then so has the~\ppivd{}
instance.  Altogether, we arrive at the following.
\begin{theorem}\label{thm:planar}
  Let~$\Pi$ be a hereditary graph property containing all independent
  sets and excluding at least one planar graph. Then, if the ETH is
  true,~\ppivd{} cannot be solved in~$2^{o(\sqrt{n})}$ time.
\end{theorem}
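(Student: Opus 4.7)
The plan is to replay the reduction chain behind Lemma~\ref{lem:no-ind-set} but starting from \pvc{} rather than \vc{}, and to verify that every step preserves planarity. Since \pvc{} does not admit a $2^{o(\sqrt{n})}$-time algorithm under ETH (\cite[Theorem 14.9]{CFK+15}), and the combined reduction will produce a planar \ppivd{} instance on $O(n)$ vertices, this yields the claimed lower bound.

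First I would check planarity preservation along the pipeline. The reduction behind Theorem~\ref{thm:scvc} subdivides every edge a constant number of times, which is planarity-preserving, so a planar \vc{} instance becomes a planar \sctvc{} instance with $O(n)$ vertices. Next, since $\Pi$ is hereditary and excludes some planar graph, its family of minimal forbidden induced subgraphs contains at least one planar graph. Fix a $\gamma$-ordering and select $H_\Pi$ as a planar minimal forbidden induced subgraph whose $\Gamma$-sequence is lexicographically smallest \emph{among planar ones}. Then $H_1$, $J=J(H_1)$, and $D$ are planar. Carrying out Construction~\ref{cons:main} on the planar \sctvc{} instance, the base graph is a disjoint union of copies of planar components of $H_\Pi$, and the vc-extension replaces each edge $\{u,v\}$ by a planar copy of $J$ glued at two vertices and attaches a planar copy of $D$ at each vertex of $V$. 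Using a planar embedding of $G$, each gadget can be placed in an incident face, keeping the whole graph planar; a vertex count estimate as in Lemma~\ref{lem:no-ind-set} gives $|V(G')|=O(|V(G)|)$.

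Next I would re-verify the equivalence by replaying the argument of Lemma~\ref{lem:no-ind-set}. The backward direction uses only the gadget structure and the girth-$3d$ assumption on $G$, both of which are inherited. The forward direction needs $G'-S$ to contain no forbidden induced subgraph of $\Pi$ at all. The original argument shows that any induced subgraph of $G'-S$ has a $\Gamma$-sequence lexicographically smaller than $\Gamma(H_\Pi)$, hence cannot be any \emph{planar} forbidden induced subgraph. Since $G'$ is planar, so is $G'-S$, and therefore $G'-S$ trivially contains no \emph{non-planar} forbidden induced subgraph either. Thus $G'-S\in \Pi$.

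The main obstacle is precisely this subtlety in the forward direction: in the original lemma the global $\Gamma$-minimality of $H_\Pi$ was what ruled out every forbidden induced subgraph, whereas here $H_\Pi$ is only $\Gamma$-minimal among planar forbidden subgraphs, and a non-planar forbidden subgraph of $\Pi$ could in principle have a strictly smaller $\Gamma$-sequence. Planarity of $G'$ dissolves this issue, as explained above. With equivalence established and $|V(G')|=O(|V(G)|)$, a hypothetical $2^{o(\sqrt{n})}$-time algorithm for \ppivd{} would yield such an algorithm for \pvc, contradicting the ETH.
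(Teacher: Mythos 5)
Your proposal is correct and follows essentially the same route as the paper: reduce from \pvc{} (using the known $2^{o(\sqrt{n})}$ ETH lower bound), observe that the edge-subdivision step behind Theorem~\ref{thm:scvc} and Construction~\ref{cons:main} both preserve planarity, restrict the choice of~$H_\Pi$ to the planar forbidden induced subgraphs, and use the planarity of the constructed graph to discharge all nonplanar forbidden subgraphs for free. The one place where you go beyond the paper is that you spell out the subtlety in the forward direction (why it suffices for~$H_\Pi$ to be $\Gamma$-minimal only among the \emph{planar} forbidden subgraphs), which the paper dispatches with a single sentence; this is a welcome clarification but not a different argument.
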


\subsection{Bounded-Degeneracy Graphs}
\label{sec:degeneracy}
For the bounded-degeneracy case, we first observe
that~Corollary~\ref{cor:degree} already implies that for~$\Pi$
containing all independent sets,~\pivd{} deletion cannot be solved in
subexponential time even on graphs with bounded degree and thus not on
graphs with bounded degeneracy. Here, we improve the bound
on the degeneracy. %by performing the reduction in a slightly more careful manner.

\begin{theorem}\label{thm:degen}
  Let~$\Pi$ be a hereditary graph property such that all independent
  sets are in~$\Pi$. Let~$\delta$ be the smallest number such that there
  exists a~$\gamma$-ordering of the forbidden subgraphs of~$\Pi$ such
  that a forbidden subgraph with lexicographically
  smallest~$\Gamma$-sequence is~$\delta$-degenerate. Then,~\pivd{} cannot
  be solved in~$2^{o(n+m)}$ time even if~$G$ is~$(\delta+1)$-degenerate.
\end{theorem}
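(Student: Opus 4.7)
The plan is to apply Construction~\ref{cons:main} starting from a \sctvc{} instance, which by Theorem~\ref{thm:scvc} provides a subcubic, $2$-degenerate input graph~$G$. I would first fix a $\gamma$-ordering of the forbidden subgraphs of~$\Pi$ so that the graph $H_\Pi$ with lexicographically smallest $\Gamma$-sequence is $\delta$-degenerate; such an ordering exists by the definition of~$\delta$. Because degeneracy is hereditary, $H_1$, $J=J(H_1)$, and~$D$ are all $\delta$-degenerate as well. I would also exercise the freedom in the construction to pick $c'\in V(J)\setminus\{c\}$ so that some vertex~$w^\ast$ in $V(J)\setminus\{c,c'\}$ achieves the minimum degree of~$J$ (which is at most~$\delta$); this is possible whenever $|V(J)|\ge 3$, and if $|V(J)|=2$ then the gadget has no non-shared vertex and plays no role in the degeneracy analysis.

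To conclude that~$G'$ is $(\delta+1)$-degenerate, I would show that every induced subgraph~$G''$ of~$G'$ has a vertex of degree at most $\delta+1$. If $G''$ has no non-$V$ vertex, then $G''$ is an induced subgraph of $G'[V]\subseteq G$, which is $2$-degenerate; since~$\Pi$ contains all independent sets we have $\delta\ge 1$, and the desired bound follows. If $G''$ contains a non-$V$ vertex in a base or $D$-copy $\hat H$, then the shared vertex set $V_S(\hat H)$ has at most one element, and applying $\delta$-degeneracy to $(G''\cap\hat H)-V_S(\hat H)$ immediately yields a non-shared vertex of degree at most $\delta+|V_S(\hat H)|\le\delta+1$ in~$G''$.

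The main obstacle is the case of $J$-copies, for which the naive argument yields only~$\delta+2$ because a minimum-degree vertex of $J-\{c,c'\}$ may be adjacent to both~$c$ and~$c'$. To overcome this, I would apply $\delta$-degeneracy of the connected graph $J-c$ to the restriction $(G''\cap\hat H)-c$: its minimum-degree vertex has degree at most~$\delta$ there, and unless it coincides with~$c'$ it already provides a non-shared vertex of degree at most $\delta+1$ in~$G''$. A symmetric argument using $J-c'$ handles the mirror case, and the only remaining scenario — when~$c'$ is the unique minimum in $(G''\cap\hat H)-c$ and~$c$ is the unique minimum in $(G''\cap\hat H)-c'$ — is excluded via an edge-counting argument that leverages the careful choice of~$c'$ (ensuring the availability of~$w^\ast$) and contradicts the $\delta$-degeneracy of the induced subgraph on $V(J'')\setminus\{c,c'\}$. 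The remainder of the proof — correctness of the reduction and the $O(|V(G)|)$ bounds on $|V(G')|$ and $|E(G')|$ — carries over unchanged from Lemma~\ref{lem:no-ind-set}, yielding the claimed ETH-based lower bound via Theorem~\ref{thm:scvc}.
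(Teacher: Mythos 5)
Your high-level plan matches the paper's: run Construction~\ref{cons:main} from \sctvc{}, exploit the remaining freedom in the choice of~$c'$, and then argue that the resulting graph is~$(\delta+1)$-degenerate. But the two points where the details actually matter are exactly where your proposal has gaps.

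First, the paper fixes a degeneracy ordering $(j_1,\ldots,j_{n_J})$ of~$J$ and sets~$c'$ to be~$j_{n_J}$ (or $j_{n_J-1}$ if~$c=j_{n_J}$), i.e.\ one of the last vertices in that ordering. Your choice — pick~$c'$ so that some non-shared vertex~$w^\ast$ attains the minimum degree of~$J$ — is a different rule, and the stated justification (``possible whenever $|V(J)|\ge 3$'') is false: if~$c$ is the \emph{unique} minimum-degree vertex of~$J$ (e.g.\ if~$c$ is a pendant of~$J$, which is a perfectly legal outcome of the $\alpha$-minimization), then no~$c'$ makes a min-degree vertex of~$J$ available outside~$\{c,c'\}$.

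Second, and more seriously, the case that actually carries the theorem — a $J$-copy whose two shared vertices~$c,c'$ both ``absorb'' the low-degree witnesses — is left to an unstated ``edge-counting argument.'' The claim that the scenario in which~$c'$ is the unique min-degree vertex of $(G''\cap\hat H)-c$ and simultaneously~$c$ is the unique min-degree vertex of $(G''\cap\hat H)-c'$ cannot occur is not obvious, and the availability of~$w^\ast$ (in the full gadget~$J$) does not help once you pass to an arbitrary induced subgraph~$G''$ that may not contain~$w^\ast$ at all. The paper avoids this entirely: with~$c'$ placed at the tail of the degeneracy ordering, deleting the $D$-copies first and then deleting the interior of each $J$-copy in the order $j_1,j_2,\ldots$ (skipping $c,c'$) yields a concrete elimination sequence in which each deleted vertex sees at most~$\delta$ later neighbors inside its gadget plus at most one of~$\{c,c'\}$ that was kept past its position, for a total of~$\delta+1$. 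That degeneracy-ordering trick is the key idea your proof is missing, and without it (or a worked-out replacement for the ``edge-counting argument'') the bound~$\delta+1$ does not follow; the naive analysis only gives~$\delta+2$, as you yourself observe.
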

\appendixproof{Theorem~\ref{thm:degen}}{
\begin{proof}
  We modify Construction~\ref{cons:main} as follows. Let~$H_\Pi$
  denote the~$\delta$-degenerate forbidden subgraph of~$\Pi$ that has
  the lexicographically smallest~$\Gamma$-sequence among all forbidden
  induced subgraphs of~$\Pi$. As before, let~$H_1$ denote the
  connected component of~$H_\Pi$ that has the highest~$\gamma$-value
  and fix~$c$ again to be a vertex such that~$\alpha(H,c)=\gamma(H)$
  and let~$J$ denote the induced subgraph containing~$c$ and an
  arbitrary but fixed largest connected component
  of~$H_1-c$. Let~$n_J$ denote the order of~$J$ and fix a
  degeneracy-ordering of~$J$, that is, a sequence~$(j_1,\ldots
  ,j_{n_J})$ such that~$j_i$ has degree at most~$\delta$ in~$J[\{j_i,
  j_{i+1},\ldots , j_{n_J}\}]$. Now let~$c'$ denote the vertex with the
  highest index that is different from~$c$. More precisely, if~$c\neq
  j_{n_J}$, then $c'=j_{n_J}$; otherwise $c'=j_{n_J-1}$.

  Now perform the construction as before using the new choice of~$c$
  and~$c'$ when replacing an edge of the vertex cover instance by a
  copy of~$J$. All other parts of the construction remain the
  same. Observe that in this modified construction we only make the
  choice of~$c'$ specific where it was arbitrary before. This implies
  that the modified construction is correct and that any algorithm
  solving~\pivd{} on the constructed instances~$(G',k)$ implies
  a~$2^{o(n+m)}$-time algorithm for~\scvc{}, violating ETH.

  It remains to show that the constructed instance
  is~$(\delta+1)$-degenerate. The base graph of the construction is
  clearly~$\delta$-degenerate as each connected component is a subgraph
  of~$H_\Pi$.

  We now describe a sequence of vertex deletions for the vc-extension
  such that each deleted vertex has degree at most~$\delta+1$ when it
  is deleted. First, observe that the graph~$D:=H_1 - (V(J)\setminus
  \{c\})$ which is added for each vertex~$v$ of the original instance
  is~$\delta$-degenerate. In every copy~$D^*$ of~$D$, all vertices
  except~$v$ (which is the vertex identified with the cut-vertex~$c$)
  have no neighbors outside of this copy. Since~$D-c$
  is~$\delta$-degenerate, every induced subgraph of~$D^*$ that
  contains~$v$ plus some other vertex~$u\neq v$, thus contains a
  vertex different from~$v$ that has degree at
  most~$\delta+1$. Consequently, this vertex can be deleted
  first. This vertex deletion can be performed as long as any
  copy~$D^*$ of~$D$ still contains a vertex different from the vertex
  that was identified with~$c$. 

  The remaining graph consists of copies~$J^*$ of~$J$ that have two
  vertices which are identified with vertices~$V$, the vertex set of
  the \sctvc{}~instance. Consider a such a copy~$J^*$ replacing an
  edge~$\{u,v\}$, that is, the vertices~$c$ and~$c'$ are identified
  with~$u$ and~$v$. By the modification of the construction,
  either~$c$ or~$c'$ equals~$j_{n_J}$. As a consequence, deleting all
  vertices of~$J^*$ that are different from~$u$ and~$v$ in the same
  order as before, gives a deletion sequence in which each deleted
  vertex has degree at most~$\delta+1$. After deleting all these
  vertices in each copy of~$J$, what remains is either an independent
  set (if~$c$ and~$c'$ are not adjacent in~$J$) or exactly the graph~$G$
  of the original \sctvc{} instance. This graph
  is~$2$-degenerate. Since~$\Pi$ contains all independent
  sets,~$H_\Pi$ is not $0$-degenerate and thus~$2\le (\delta+1)$.
\end{proof}
} An example application of Theorem~\ref{thm:degen} is the
following. If all forbidden subgraphs of~$\Pi$ are
acyclic, then~\pivd{} cannot be solved in~$2^{o(n+m)}$ time even
if~$G$ is 2-degenerate.
\subsection{Graphs with a Dominating Vertex}
\label{sec:dominating}

Next, we consider instances which have a dominating vertex and thus
diameter two. 
\begin{proposition}\label{prop:dominating}
  Let~$\Pi$ be a hereditary graph property.
  % \begin{itemize}
  % \item
    If~$\Pi$ contains all independent sets, then $\pivd{}$ cannot
    be solved in~$2^{o(n+m)}$ time even if~$G$ has a dominating
    vertex.
  % \item If~$\Pi$ does not contain all independent sets,
  %   then~$\pivd{}$ cannot be solved in~$2^{o(n)}$ time even if~$G$ has
  %   a dominating vertex.
  % \end{itemize}
\end{proposition}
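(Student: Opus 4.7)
The plan is to take the instance $(G', k')$ produced by Construction~\ref{cons:main} (as used in the proof of Lemma~\ref{lem:no-ind-set}) and augment it with a new vertex $v^*$ made adjacent to every vertex of $G'$, so that $v^*$ becomes a dominating vertex in the resulting graph $G''$; set $k'' := k' + 1$. The forward direction of the equivalence is immediate: for any size-$k'$ solution $S$ for $(G', k')$, the set $S \cup \{v^*\}$ has size $k' + 1$ and $G'' - (S \cup \{v^*\}) = G' - S \in \Pi$. In the reverse direction, any size-$(k' + 1)$ solution $S''$ with $v^* \in S''$ yields $S'' \setminus \{v^*\}$ as a size-$k'$ solution for $(G', k')$.

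The main obstacle is the case $v^* \notin S''$. I would handle it by a case distinction on whether $\Pi$ has a minimal forbidden induced subgraph $F$ with a dominating vertex $c_F$. If no such $F$ exists, then any induced subgraph of $G''$ containing $v^*$ would have $v^*$ as an internal dominating vertex and hence cannot itself be a forbidden induced subgraph of $\Pi$; consequently, the forbidden induced subgraphs of $G''$ lie entirely within $G'$, so one may even set $k'' := k'$ and the equivalence follows directly from the heredity of $\Pi$. If some forbidden $F$ with dominating vertex $c_F$ does exist, I would modify Construction~\ref{cons:main} to additionally contain $k' + 2$ pairwise vertex-disjoint induced copies of $F - c_F$ in $G'$. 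The dominating adjacencies of $v^*$ then turn each such copy into an induced copy of $F$ in $G''$, and destroying all $k' + 2$ resulting copies of $F$ by a solution of size at most $k' + 1$ avoiding $v^*$ would need at least $k' + 2$ deletions in the copies alone, a contradiction; hence $v^* \in S''$ is forced.

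The delicate technical point in the second subcase is to ensure that adding $k' + 2$ disjoint copies of $F - c_F$ to $G'$ does not spuriously introduce new disconnected forbidden induced subgraphs of $\Pi$ (which could arise by supplying previously missing components that combine with parts of $G'$). This can be addressed by tuning the multiplicities of components of the base graph of Construction~\ref{cons:main} analogously to how that construction already balances copies of the components of $H_\Pi$ to sit just below the forbidden threshold. Since the added gadgets depend only on $\Pi$, contributing $O(1)$ extra vertices and edges per copy, and $v^*$ contributes $O(|V(G')|)$ edges, the graph $G''$ has $O(n + m)$ vertices and edges, where $n$ and $m$ refer to the source \sctvc{} instance. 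Thus any $2^{o(n'' + m'')}$-time algorithm for \pivd{} on inputs with a dominating vertex would yield a $2^{o(n + m)}$-time algorithm for \sctvc{}, contradicting the ETH via Theorem~\ref{thm:scvc}.
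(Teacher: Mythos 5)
Your overall plan (add a universal vertex $v^*$, try to force it into the solution, and reduce to Lemma~\ref{lem:no-ind-set}) is the same as the paper's, and your Case A is sound: if no minimal forbidden induced subgraph of~$\Pi$ has a dominating vertex, then no minimal forbidden subgraph of~$G''$ can contain~$v^*$, so adding~$v^*$ does not change the solution set and $k'' := k'$ works. However, your Case A is a strictly \emph{narrower} condition than what the paper uses, and your Case B, which must then cover strictly \emph{more} ground, has a real gap.

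The gap in Case B is exactly the ``delicate technical point'' you flag but do not resolve. The graph $F - c_F$ need not have all its connected components with $\gamma$-value below $\gamma(H_1)$. Since $H_\Pi$ is chosen to have the lexicographically smallest $\Gamma$-sequence, some other minimal forbidden subgraph $F$ with a dominating vertex can easily have the property that $F - c_F$ contains a component isomorphic to, or larger (in $\gamma$-value) than, $H_1$. If you then add $k'+2$ disjoint copies of $F - c_F$ to $G'$, you introduce $\Omega(k')$ fresh disjoint components with $\gamma$-value $\ge \gamma(H_1)$, on top of the $d-1$ copies of $H_1$ already in the base graph. This will typically embed $H_\Pi$ (or another forbidden subgraph) into the residual graph $G''-(S\cup\{v^*\})$ even when $S$ is a correct vertex cover of the source instance, so the ``yes $\Rightarrow$ yes'' direction of your reduction fails. ``Tuning the multiplicities of the base graph'' cannot fix this: the base-graph multiplicities are tuned to sit one copy of $H_1$ below the forbidden threshold, and that calibration is destroyed the moment you inject new high-$\gamma$-value components; there is no multiplicity adjustment that makes $k'+2$ extra copies of a high-$\gamma$ component safe.

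The paper instead branches on a different condition: whether there exists a graph $\mathcal{I}$ \emph{whose components all lie in $\mathcal{H}^{<}$} (i.e., all have $\gamma$-value strictly below $\gamma(H_1)$) such that adding a dominating vertex to $\mathcal{I} \cup (d-1)H_1$ produces a non-$\Pi$ graph. If such an $\mathcal{I}$ exists (Case~1), the paper adds exactly this $\mathcal{I}$ alongside $v^*$ and increases the budget by one; crucially, because $\mathcal{I}$'s components are all in $\mathcal{H}^{<}$, the disjoint union of $G'-S$ and $\mathcal{I}$ stays in $\Pi$, so the forward direction survives, while the bad configuration $\mathcal{I}\cup(d-1)H_1\cup\{v^*\}$ forces $v^*$ into every size-$(k+1)$ solution. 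If no such $\mathcal{I}$ exists (Case~2), the paper adds only $v^*$ and keeps the budget at $k$, using the case assumption directly to show that $G'-S$ plus a dominating vertex is still in $\Pi$. Your Case~A is a sub-case of the paper's Case~2 (a forbidden subgraph with a dominating vertex is necessary, but not sufficient, for the paper's Case~1 to fire), so there are properties $\Pi$ where your argument enters Case~B and builds a broken instance while the paper's Case~2 succeeds without any extra gadget at all.
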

\appendixproof{Proposition~\ref{prop:dominating}}{
\begin{proof}
  We extend Construction~\ref{cons:main}. Recall that~$(G,k)$ is the
  \sctvc{} instance and that~$(G',k)$ is the instance of~\pivd{} obtained
  by Construction~\ref{cons:main}. Let~$H_\Pi$ denote again the
  forbidden induced subgraph of~$\Pi$ that has the lexicographically
  smallest~$\Gamma$-sequence, let~$H_1$ denote the connected component
  of~$H_\Pi$ that has maximum~$\gamma$-value and assume that~$d$
  connected components of~$H_\Pi$ are isomorphic to~$H_1$. Finally,
  let~${\cal H}^{<}$ denote the set of graphs whose~$\gamma$-value is
  smaller than~$H_1$. Now distinguish two cases.

  \emph{Case~1: There is a graph~${\cal I}$ whose connected
    components~$I_1, \ldots ,I_q$ are all from ${\cal H}^{<}$ such
    that the graph obtained by
    \begin{itemize}
    \item taking the disjoint union of~${\cal I}$
      and~$d-1$ copies of~$H_1$, and then
    \item adding a further vertex~$v$ and making~$v$ adjacent to all
      vertices of the graph
    \end{itemize}
    is not contained in~$\Pi$.} Observe that by the choice of~$H_\Pi$
  and~${\cal H}^{<}$,~${\cal I}$ is contained in~$\Pi$.  Perform the
  construction as previously, let~$(G',k)$ denote the graph as
  previously constructed. Now take the disjoint union of~${\cal I}$
  and~$G'$. Then, add a vertex~$v^*$ and make it adjacent to all
  vertices of this graph and call the resulting graph~$G^*$. We show
  that~$(G^*,k+1)$ and~$(G',k)$ are equivalent instances.

  First, assume that~$G'$ has a~$\Pi$-vertex deletion set~$S'$ of size
  at most~$k$. The proof of Lemma~\ref{lem:no-ind-set} shows that~$G$
  has a vertex cover of size~$k$ which then implies that we can assume
  without loss of generality that~$S'$ does not contain vertices of the
  base graph. Thus,~$S'$ contains~$d-1$ connected components
  that are isomorphic to~$H_1$ and all other connected components
  of~$G'-S'$ have smaller~$\gamma$-value than~$H_1$. By the choice
  of~$H_\Pi$ and by the fact that all connected components of~${\cal
    I}$ are from~${\cal H}^<$, we have that the disjoint union
  of~$G-S$ and~${\cal I}$ is contained
  in~$\Pi$. Thus,~$G^*-(S'\cup\{v^*\})$ is contained in~$\Pi$ and~$G^*$
  has a~$\Pi$-vertex deletion set of size at most~$k+1$.

  Now assume that~$(G',k)$ is a no-instance, that is,~$G'$ has
  no~$\Pi$-vertex deletion set of size at most~$k$. Let~$S'$ be a
  minimum-cardinality~$\Pi$-vertex deletion set of~$G'$. If~$|S'|\ge
  k+2$, then any~$\Pi$-vertex deletion set of~$G^*$ has size at
  least~$k+2$ and the instances are equivalent. Hence,
  assume~$|S|=k+1$. We first show that~$S$ of~$G$ leaves exactly~$d-1$
  disjoint copies of~$H_1$ in~$G'-S$. Assume towards a contradiction
  that there is a minimum-cardinality~$\Pi$-vertex deletion set~$S$ such
  that~$d'<d-1$ disjoint copies of~$H_1$ remain in~$G-S$. Thus, at
  least~$d^*=(d-1)-d'$ copies of~$H_1$ in the base graph are destroyed
  which means that~$S'$ contains at most~$k+1-d^*$ vertices of the
  vc-extension. Furthermore, observe that the vc-extension of~$G'-S'$
  contains at most~$d^*-1$ disjoint copies of~$H_1$ by the assumption
  on~$S'$. We show that~$G$ has a vertex cover of size at most~$k-1$.

  To this end, create a set~$S_A\subseteq V$ as follows:
  \begin{itemize}
  \item For each vertex~$v\in V\cap S'$, add~$v$ to~$S_A$,
  \item for each copy~$D^*$ of~$D$ in~$G'$ such that~$S'$
    contains a vertex from~$V(D^*)\setminus
    V$, add the vertex~$v\in V(D^*)\cap
    V$ to~$S_A$,
  \item for each copy~$J^*$ of~$J$ such that~$S'\setminus V$ contains
    a vertex from~$V(J^*)\setminus V$ add an arbitrary vertex
    of~$V(J^*)\cap V$ to~$S_A$.
  \end{itemize}
  The set~$S_A$ has size at most~$k+1-d^*$. Now, consider the
  graph~$G-S_A$ and observe that every edge~$\{u,v\}$ in~$G-S_A$
  directly corresponds to an induced copy of~$H_1$ in~$G'-S'$:
  Since~$\{u,v\}$ is present in~$G-S_A$, we have that~$S'$ does not
  contain~$u$ or~$v$, does not contain vertices from the copies of~$D$
  that are attached to~$u$ or~$v$, and does not contain a vertex from
  the copy of~$J$ attached to~$u$ and~$v$. Now let~$q$ denote the size
  of a maximum matching in~$G-S_A$ and observe that this implies that
  the vc-extension of~$G'-S'$ has~$q$ vertex-disjoint copies
  of~$H_1$. Hence,~$q<d^*<d$. Thus,~$G-S_A$ has a maximum matching of
  size less than~$d$ and thus (since~$G$ has girth~$3d$) a vertex
  cover of size~$q<d^*$. Thus,~$G$ has a vertex cover of
  size~$k+1-d^*+q\le k$. This implies that~$G'$ has a~$\Pi$-vertex
  deletion set of size at most~$k$, contradicting our assumption
  on~$S$.

  Thus, if~$S$ has minimum-cardinality, then~$G^*-S$ contains
  exactly~$d-1$ disjoint copies of~$H_1$. Now~$G^*-S$ contains a
  disjoint union of~${\cal I}$ and~$d-1$ connected components
  isomorphic to~$H_1$ and a vertex~$v^*$ that is adjacent to all
  vertices in~$G^*-S$. Hence,~any~$\Pi$-vertex deletion set of~$G^*$
  has size at least~$k+2$.

  \emph{Case~2: otherwise.} Add a vertex~$v$ to the graph~$G'$
  obtained by the original construction and make~$v^*$ adjacent to all
  vertices in~$G$, call the resulting graph~$G^*$. We show
  that~$(G^*,k)$ and~$(G',k)$ are equivalent. First, if~$G^*$ has
  a~$\Pi$-vertex deletion set of size at most~$k$, then so does~$G'$
  since~$G'$ is an induced subgraph of~$G^*$. Second, let~$S'$ denote
  a~$\Pi$-vertex deletion set of size at most~$k$. As discussed in the
  proof of Case~1, we can assume without loss of
  generality that~$S'$ does not contain vertices of the base graph.

  Consequently,~$G'-S'$ contains a disjoint union of~$d-1$ copies
  of~$H_1$ and further graphs~$H'$
  with~$\gamma(H')<\gamma(H_1)$. Thus, with the exception of the
  copies of~$H_1$, every other connected component of~$G'-S'$ is a graph
  from~${\cal H}^-$. By the case assumption, adding a vertex to~$G'-S'$
  and making it adjacent to all vertices of~$G'-S'$ gives a graph
  in~$\Pi$. Hence, the isomorphic graph~$G^*-S'$ is also contained
  in~$\Pi$.

  Summarizing, in both cases we can modify the construction such that the
  resulting graph has one dominating vertex. Observe that this
  increases the number of edges by at most~$|V(G')+O(1)|$. Thus, the
  resulting graph has~$O(|V(G)|)$ edges and vertices, the running time
  bound follows.
\end{proof}}
Now for graph properties excluding some fixed independent set, all we
can hope for is excluding~$2^{o(n+\sqrt{m})}$-time algorithms (since the
general case can be solved in~$2^{o(m)}$ time).
\begin{proposition}\label{prop:dominating2}
  Let~$\Pi$ be a hereditary graph property such that~$\Pi$ does not
  contain all independent sets, then~$\pivd{}$ cannot be solved
  in~$2^{o(n+\sqrt{m})}$ time even if~$G$ has a dominating vertex.
  % \end{itemize}
\end{proposition}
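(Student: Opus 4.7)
The plan is to adapt the proof of Proposition~\ref{prop:dominating} to the complementary property $\bar{\Pi}$ (which, by the Ramsey argument from Section~\ref{sec:hardness}, contains all independent sets whenever $\Pi$ excludes some independent set), replacing ``universal vertex'' by ``isolated vertex'' throughout, and then to transfer the resulting hardness back to~\pivd{} via the complementation argument underlying Theorem~\ref{thm:main}(1).

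First, I would re-invoke the Ramsey-theoretic argument used between Lemma~\ref{lem:no-ind-set} and its corollary to conclude that $\Pi$ contains all cliques and therefore $\bar{\Pi}$ contains all independent sets, so that Proposition~\ref{prop:dominating} applied to $\bar{\Pi}$ yields no $2^{o(n+m)}$-time algorithm for $\bar{\Pi}$-\textsc{Vertex Deletion} on graphs with a dominating vertex. Next, I would re-run the construction and case analysis of Proposition~\ref{prop:dominating} for $\bar{\Pi}$, but append the extra vertex $v^*$ to the graph $G'$ of Construction~\ref{cons:main} as an \emph{isolated} vertex: Case~1' assumes the existence of a graph $\mathcal{I}$ with all components in $\mathcal{H}^{<}$ such that $(d-1)\cdot H_1\cup \mathcal{I}\cup K_1\notin\bar{\Pi}$; $G^*$ is then obtained by attaching $\mathcal{I}$ and an isolated $v^*$ disjointly to $G'$ and the budget is raised to $k+1$, forcing $v^*$ into every size-$(k+1)$ solution by the same counting argument as before (any size-$(k+1)$ solution that spares $v^*$ leaves $(d-1)H_1$, $\mathcal{I}$ and $v^*$ intact as a forbidden induced subgraph). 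Case~2' adds only the isolated $v^*$, keeps the budget~$k$, and uses the Case~2' premise to show $(G'-S)\cup\{v^*\}\in\bar{\Pi}$ for every canonical solution~$S$. The equivalence proofs are almost literal transcriptions of those of Proposition~\ref{prop:dominating}, and the resulting hard instances have $O(n)$ vertices, $O(n)$ edges, and a designated isolated vertex.

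Finally, I would complement those instances: the isolated vertex becomes a dominating vertex, $n$ stays the same, and $m'\le \binom{n}{2}$. Thus a hypothetical $2^{o(n+\sqrt{m})}$-time algorithm for~\pivd{} on dominating-vertex graphs would yield, via complementation, a $2^{o(n+n)}=2^{o(n)}\subseteq 2^{o(n+m)}$-time algorithm for $\bar{\Pi}$-\textsc{Vertex Deletion} on the isolated-vertex instances constructed above, contradicting the lower bound of the previous paragraph. The main obstacle is Case~1': one must verify that the forcing gadget $\mathcal{I}$ with all components in $\mathcal{H}^{<}$ exists whenever $\bar{\Pi}$ is not closed under attaching an isolated vertex to graphs of the relevant structure, and that $(G'-S)\cup\mathcal{I}$ stays in $\bar{\Pi}$ once $v^*$ is deleted. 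This mirrors the corresponding step of Proposition~\ref{prop:dominating}'s Case~1 and should be settled by a $\Gamma$-sequence minimality argument showing that $(G'-S)\cup\mathcal{I}$ has a lex-smaller $\Gamma$-sequence than $H_{\bar{\Pi}}$.
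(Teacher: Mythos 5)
Your overall framework is correct — complement to $\bar{\Pi}$ (which contains all independent sets by the Ramsey argument), run Construction~\ref{cons:main}, produce a hard instance with an isolated vertex, then complement the graph so the isolated vertex becomes dominating and $m=O(n^2)$, giving the $2^{o(n+\sqrt{m})}$ bound. But the middle step, where you try to transplant the Case~1/Case~2 analysis of Proposition~\ref{prop:dominating} with ``universal'' replaced by ``isolated,'' is both unnecessary and conceptually off the mark, and this is exactly where you flag the ``main obstacle'' of your plan.

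The point you are missing is that an isolated vertex, unlike a universal one, can never be forced into a solution and can never create a new forbidden induced subgraph of $\bar{\Pi}$. Since $H_1$ (the maximum-$\gamma$ component of $H_{\bar{\Pi}}$) contains an edge, a $K_1$-component has $\gamma$-value strictly below $\gamma(H_1)$; hence for any $\mathcal{I}$ whose components are all in $\mathcal{H}^{<}$, the graph $(d-1)H_1\cup\mathcal{I}\cup K_1$ still has at most $d-1$ connected components of $\gamma$-value $\ge\gamma(H_1)$, so its $\Gamma$-sequence is lexicographically smaller than $\Gamma(H_{\bar{\Pi}})$ and it lies in $\bar{\Pi}$. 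In other words, your Case~1$'$ is vacuous: the ``forcing gadget'' $\mathcal{I}$ you worry about finding cannot exist. Relatedly, the parenthetical argument that a budget-$(k+1)$ solution ``that spares $v^*$ leaves $(d-1)H_1$, $\mathcal{I}$ and $v^*$ intact as a forbidden induced subgraph'' is false for precisely this reason — that union is not forbidden, and nothing forces $v^*$ to be deleted.

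The paper's proof exploits this directly and is much shorter than what you propose: it applies Construction~\ref{cons:main} to $\bar{\Pi}$, appends a single isolated vertex $v^*$ with the \emph{same} budget $k$ (no Case~1/Case~2 split, no $\mathcal{I}$, no budget increase), observes the instance is equivalent because $H_1$ has an edge so the isolated vertex is harmless, and then complements. You should drop the attempted case analysis and simply make the observation that any solution of the isolated-vertex instance restricted to $V(G')$ still works for $G'$ (hereditariness), and that adding $K_1$ to $G'-S$ keeps the number of components with $\gamma$-value $\ge\gamma(H_1)$ at most $d-1$, so $G^*-S\in\bar{\Pi}$. Everything else in your proposal (Ramsey step, the final complementation and the $2^{o(n+\sqrt{m})}\Rightarrow 2^{o(n)}$ calculation) is fine.
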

\appendixproof{Proposition~\ref{prop:dominating2}}
{
\begin{proof}
  Let~$\bar{\Pi}$ denote the graph property that contains all graphs
  that are complement graphs of graphs in~$\Pi$.  Consider
  Construction~\ref{cons:main} when applied to show that~$\bar{\Pi}$
  cannot be solved in~$2^{o(n)}$ time. Let~$G'$ denote the graph
  obtained by this construction.  Since~$H_1$ contains at least one
  edge, one may safely add an isolated vertex at the end of the
  construction and obtain a graph~$G^*$ that has a $\bar{\Pi}$-vertex
  deletion set of size at least~$k$ if and only if~$G'$
  does. Hence,~$\bar{\Pi}$-vertex cannot be solved in~$2^{o(n+m)}$ time
  assuming the ETH, also if the input graph~$G^*$ has an isolated
  vertex. Now the claimed hardness result follows from the fact that
  the complement graph~$\bar{G^*}$ has a dominating vertex and the
  same number of overall vertices and every induced
  subgraph~$\bar{G^*}[X]$ of~$\bar{G^*}$ is in~$\Pi$ if and only
  if~$G^*[X]$ is in~$\bar{\Pi}$.
\end{proof}
}

\section{Connected $\mathbf{\Pi}$-Vertex Deletion}
Finally, we consider a popular variant of~\pivd{} where the set of
deleted vertices has to be connected. Special cases of this problem
that have been studied include \textsc{Connected Vertex
  Cover}~\cite{GNW07}, \textsc{Connected Feedback Vertex
  Set}~\cite{GS09,MPRSS12}, and \textsc{Minimum $k$-Path Connected
  Vertex Cover}~\cite{LLWW13}.  \decprob{\cpivd}{An undirected
  graph~$G=(V,E)$ and an integer~$k$.}{Is there a set~$S\subseteq V$
  such that~$G[S]$ is connected,~$|S|\le k$, and~$G[V\setminus S]$ is
  contained in~$\Pi$?}  For this problem, we obtain the same running
time bounds as for \pivd{}.
\begin{theorem}\label{thm:main-con}
  Let~$\Pi$ be a hereditary nontrivial graph property, then:
  \begin{enumerate}
  \item If the ETH is true, then \cpivd{} cannot be solved in~$2^{o(n+\sqrt{m})}$ time,
  \item If the ETH is true, then \cpivd{} cannot be solved in~$2^{o(n+m)}$ time if~$\Pi$ contains
    all independent sets.
  \item If~$\Pi$ excludes some independent set, then \cpivd{} can be
    solved in~$2^{O(\sqrt{m})}+O(n)$ time if and only if membership
    in~$\Pi$ can be recognized in~$2^{O(n)}$ time.
  \item If~$\Pi$ excludes some independent set, then \cpivd{} can be
    solved in~$2^{o(m)}+O(n)$ time if and only if membership in~$\Pi$
    can be recognized in~$2^{o(m)}$ time.
  \end{enumerate}
\end{theorem}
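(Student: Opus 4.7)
The plan is to establish the four statements of Theorem~\ref{thm:main-con} by mirroring the proof of Theorem~\ref{thm:main} while enforcing the connectedness of the deletion set.

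For the algorithmic upper bounds (statements~3 and~4), I would adapt the branching algorithm in Lemma~\ref{lem:subexp} by adding a connectivity check for~$G[V \setminus A]$ to each branch, which runs in polynomial time and therefore does not affect the overall running time. The singleton-removal preprocessing has to be revisited because isolated vertices cannot be placed into a connected deletion set of size at least two; however, an instance with more than~$d$ isolated vertices admits no valid solution (any set containing an isolated vertex together with any other vertex is disconnected, and any set omitting enough isolated vertices still leaves a~$d$-vertex independent set in~$G - S$), so such instances are rejected upfront and afterwards~$n = O(m)$. The ``only if'' directions follow from the case~$k=0$, where~\cpivd{} coincides with the recognition problem for~$\Pi$.

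For statement~2, the idea is to combine Construction~\ref{cons:main} with the dominating-vertex extension from the proof of Proposition~\ref{prop:dominating} so that the dominating vertex~$v^*$ is forced into every optimal deletion set. Case~1 of Proposition~\ref{prop:dominating} already achieves this, because the extra structure added there makes the instance a no-instance whenever~$v^*$ is not deleted. In Case~2 the same effect can be obtained by attaching a small gadget consisting of a minimal forbidden induced subgraph of~$\Pi$ sharing~$v^*$, chosen so that any solution of size at most~$k+1$ must contain~$v^*$. Since~$v^*$ is adjacent to every other vertex of the constructed graph~$G^*$, once~$v^*$ lies in the deletion set the induced subgraph on this set is automatically connected. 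Hence \cpivd{} and \pivd{} on the modified instance have the same answer, and the~$2^{o(n+m)}$ lower bound transfers from Lemma~\ref{lem:no-ind-set}.

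For statement~1, distinguish two cases via Ramsey's theorem. If~$\Pi$ contains all independent sets, statement~2 already provides the stronger~$2^{o(n+m)}$ lower bound, which subsumes~$2^{o(n+\sqrt{m})}$. Otherwise~$\Pi$ contains all cliques, and~$\bar{\Pi}$ contains all independent sets. Apply Construction~\ref{cons:main} to~$\bar{\Pi}$ starting from a \sctvc{}-instance whose input graph is the disjoint union of two nontrivial copies; the resulting graph~$G'$ then decomposes into a base component and two vertex-disjoint vc-extensions, one per copy of the source graph. By the analysis of Lemma~\ref{lem:no-ind-set}, every~$\bar{\Pi}$-vertex deletion set~$S$ must destroy an~$H_1$-copy in both vc-extensions, so~$S$ intersects two pairwise disjoint connected components of~$G'$ and hence~$G'[S]$ is disconnected. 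Now complement~$G'$ and use the classical fact that, for~$|S|\ge 2$, if~$G'[S]$ is disconnected then~$\overline{G'}[S]$ is connected. Combined with the bijection $G' - S \in \bar{\Pi} \Leftrightarrow \overline{G'} - S \in \Pi$, this yields a \cpivd{}-instance~$(\overline{G'}, k)$ for~$\Pi$ equivalent to the underlying \sctvc{}-instance, with~$n = O(|V(G)|)$ vertices and~$m = O(n^2)$ edges, contradicting the ETH for \sctvc{} via Theorem~\ref{thm:scvc}.

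The main obstacles are the forcing of the dominating vertex into the solution in Case~2 of statement~2 (handled by an extra gadget that exhausts the remaining budget unless~$v^*$ is deleted) and the handling of connectedness under complementation in the clique-containing case of statement~1 (handled by engineering the pre-complement instance from two disjoint source copies so that every solution is disconnected in~$G'$ and therefore connected in~$\overline{G'}$).
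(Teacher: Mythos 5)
Your algorithmic side (statements 3 and 4) matches the paper's approach exactly, and your observation about isolated vertices in the preprocessing step is a sensible refinement that the paper leaves implicit. Your argument for statement~1 is genuinely different from the paper's: you take the source graph to be a disjoint union of two nontrivial components so that the constructed solution hits two distinct connected components of~$G'$, forcing $G'[S]$ to be disconnected and hence $\overline{G'}[S]$ to be connected; the paper instead adds one extra copy of~$H_1$ to the base graph so that a single vertex~$v$ of that extra copy lies in the solution and, in~$\overline{G'}$, is adjacent to everything else in the solution. Both routes work. Your phrase ``every $\bar\Pi$-vertex deletion set~$S$ must destroy an~$H_1$-copy in both vc-extensions'' is too strong as written (a deletion set could in principle spend budget in the base graph), but what you actually need and have is that the \emph{specific} $S$ obtained from a vertex cover of~$G$ hits both extensions, which suffices for the forward direction, while the backward direction needs only the unconstrained equivalence of Lemma~\ref{lem:no-ind-set}.

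The gap is in statement~2. The paper explicitly considers and discards the universal-vertex route and instead attaches a binary tree of forced-to-be-deleted vertices on top of~$V$ (this also preserves bounded degree for Corollary~\ref{cor:conn-degree}). Your proposal leans on Proposition~\ref{prop:dominating}, but Case~2 of that proposition is exactly the case in which adding a universal vertex is \emph{harmless} for~$\Pi$, i.e., $\Pi$ tolerates the disjoint union of ${\cal I}$ with~$d-1$ copies of~$H_1$ plus a universal vertex. In that situation there is no obvious small gadget ``consisting of a minimal forbidden induced subgraph sharing~$v^*$'' that forces~$v^*$ into every budget-$(k+1)$ solution: once~$v^*$ is made universal, the induced subgraph on~$v^*$ together with a planted copy of~$H_\Pi$ (minus one vertex) is in general no longer isomorphic to~$H_\Pi$ because the adjacency pattern of~$v^*$ differs from that of the missing vertex, and the Case~2 hypothesis says precisely that~$\Pi$ is robust against attaching a universal vertex. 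Increasing the budget to~$k+1$ also opens the door to a connected solution of size~$k+1$ wholly inside~$G'$ that does not use~$v^*$, breaking the backward direction when~$(G,k)$ is a no-instance but~$(G',k+1)$ is a yes-instance. To repair this you would have to show that any such budget-$(k+1)$ solution avoiding~$v^*$ is necessarily disconnected (e.g., by ensuring it must hit at least two components of~$G^*-v^*$), and you have not done that; the paper's binary-tree construction sidesteps the case analysis entirely by planting~$|V_T|$ new vertex-disjoint copies of~$H_1$ whose sole cheap destruction points are the tree vertices, which are then forced into every solution and connect the rest.
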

The two positive results for~$\Pi$ excluding some independent set can
be easily obtained by adapting the algorithm described in
Lemma~\ref{lem:subexp} in such a way that solutions (vertex sets to
delete) are only accepted if they induce a connected subgraph. Also,
the only if part of the last two statements follows directly since it is
obtained by considering the special case~$k=0$, where the solution is
empty and thus trivially connected.  To prove the running time lower
bounds given in Theorem~\ref{thm:main-con}, we extend
Construction~\ref{cons:main}. One approach could be to add a universal
vertex that needs to be deleted. To obtain hardness also in the case
of bounded-degree graphs (if~$\Pi$ contains all independent sets), we
instead modify the construction by adding a set of vertices that is
part of every connected minimum-cardinality $\Pi$-vertex deletion set
and connects the solution vertices of the original instance by forming
a binary tree.
\begin{lemma}\label{lem:con-hard}
  Let~$\Pi$ be a hereditary nontrivial graph property containing all
  independent sets. If the ETH is true, then \cpivd{} cannot be solved
  in~$2^{o(n+m)}$ time.
\end{lemma}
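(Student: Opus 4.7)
The plan is to enrich Construction~\ref{cons:main} with a bounded-degree binary-tree backbone whose internal vertices are automatically forced into every connected $\Pi$-vertex deletion set and which then provides the connectivity needed to glue the vertex-cover part of the solution into an induced connected subgraph. Let $(G,k)$ be the \sctvc{} instance and let $(G',k)$ be the graph produced by Construction~\ref{cons:main}, where I replace the $2\cdot|V|$ base copies of each non-$H_1$ component of $H_\Pi$ by $4\cdot|V|$ copies so as to accommodate the enlarged budget. I build $G^*$ from $G'$ by adding a balanced binary tree $T$ whose $|V(G)|$ leaves are identified with the vertices of $V(G)$ and whose $t=O(|V(G)|)$ internal vertices have degree at most three, with tree vertices connected by direct edges. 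For every internal vertex $\tau$ of $T$, I attach a fresh copy $\hat H_1^\tau$ of $H_1$ to $\tau$ by identifying $\tau$ with the cut-vertex $c$ of $H_1$. The new budget is $k^*:=k+t$. Because $|V(H_\Pi)|$ is a constant depending only on $\Pi$, $G^*$ still has $O(|V(G)|+|E(G)|)$ vertices and edges, so an algorithm for \cpivd{} running in $2^{o(|V(G^*)|+|E(G^*)|)}$ time would violate Theorem~\ref{thm:scvc}.

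For the forward direction I would take $S^*:=S\cup\{\tau:\tau\text{ is internal in }T\}$, where $S$ is a size-$k$ vertex cover of $G$; clearly $|S^*|\leq k^*$. Every attached copy $\hat H_1^\tau$ is destroyed because $\tau\in S^*$; every copy of $H_1$ induced in the vc-extension of $G$ is destroyed because $S$ is a vertex cover (exactly as in the proof of Lemma~\ref{lem:no-ind-set}); and only the $d-1$ base copies of $H_1$ survive. Arguing as in Lemma~\ref{lem:no-ind-set}, $G^*-S^*$ contains no forbidden induced subgraph of $\Pi$. Connectivity of $S^*$ is immediate from the construction: every pair of internal vertices of $T$ is linked by direct tree edges, and each vertex $v\in S$ is a leaf of $T$ adjacent to its tree-parent, which lies in $S^*$.

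The main obstacle is the reverse direction, but the forcing of the tree drops out of the connectivity requirement. Suppose $S^*$ is a connected $\Pi$-vertex deletion set of $G^*$ of size at most $k^*$. Because the base graph contributes $d-1$ copies of $H_1$ and at least $4|V|-|S^*|\geq 1$ copies of every other component of $H_\Pi$ survive, $G^*-S^*$ cannot contain any further induced copy of $H_1$; in particular, each of the $t$ attached copies $\hat H_1^\tau$ must be hit by $S^*$. The only vertex of $\hat H_1^\tau$ adjacent to anything outside $V(\hat H_1^\tau)$ is $\tau$ itself, so if $\tau\notin S^*$ then $S^*\cap V(\hat H_1^\tau)$ forms a connected component of $G^*[S^*]$ disjoint from the rest of $S^*$, contradicting connectivity (the rest of $S^*$ is nonempty because some other copy $\hat H_1^{\tau'}$ still needs hitting). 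Hence every internal $\tau$ lies in $S^*$. Removing these $t$ forced vertices leaves at most $k$ deletions inside $G'$, which by the projection argument of Lemma~\ref{lem:no-ind-set} yield a vertex cover of $G$ of size at most $k$. The delicate point that I expect to need a little care is verifying that one can always choose a minimum connected solution in which the ``non-tree'' part projects cleanly onto $V(G)$ rather than sitting wastefully inside copies of $J$ or $D$; a local-exchange argument analogous to the one used in Lemma~\ref{lem:no-ind-set}, combined with the fact that the tree already supplies the connectivity, handles this.
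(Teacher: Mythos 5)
Your approach is essentially the same as the paper's: you augment Construction~\ref{cons:main} with a bounded-degree binary tree whose leaves are the vertices of the original \sctvc{} instance, attach a copy of~$H_1$ at the cut-vertex~$c$ to each internal tree vertex to force it into any solution, and increase the budget by the number of internal vertices; the forward direction then uses the tree for connectivity and the reverse direction shows the tree budget is fully consumed, leaving at most~$k$ deletions to project back onto~$G$ exactly as in Lemma~\ref{lem:no-ind-set}. This is the same reduction the paper gives.

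One small simplification worth noting: you double the base-graph multiplicity from~$2|V|$ to~$4|V|$ and argue via a counting bound that the non-$H_1$ components still survive. The paper instead observes that the base graph is a separate connected component of the constructed graph, so a \emph{connected} deletion set that also hits the vc-extension (which it must) cannot touch the base graph at all; this makes the extra copies unnecessary and removes the counting step. Your version is correct, just a bit more conservative than needed. Also, the ``delicate point'' you flag at the end is not actually an issue: the reverse direction only needs to turn a given connected solution into a vertex cover of~$G$, so the projection of Lemma~\ref{lem:no-ind-set} applies directly and no connectivity has to be preserved during the exchange.
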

\appendixproof{Lemma~\ref{lem:con-hard}}
{
\begin{proof}
  We reduce from \sctvc{} by adapting
  Construction~\ref{cons:main}. Let~$(G=(V,E),k)$ denote the instance
  of~\scvc{}. To simplify the construction somewhat, assume without
  loss of generality that~$V=\{1,\ldots, n\}$ and that the number~$n$
  of vertices in~$G$ is a power of~$2$, that is,~$n=2^\mu$ for some~$
  \mu \in \mathds{N}$. First, perform Construction~\ref{cons:main}.
  Then, set~$i:= \mu-1$ and add a set~$V_i:=\{v_i^1,\ldots,v_{2^i}\}$
  of~$2^i$ vertices, and make each~$v_i^j$ adjacent to~$j$ and~$j+2^i$
  (recall that the output graph of Construction~\ref{cons:main}
  has~$V$ as a vertex subset). Now, for each vertex~$v_i^j$ of~$V_i$,
  add a copy of~$H_1$ and identify~$c$ (a fixed vertex of~$H_1$ such
  that~$\alpha(H_1,c)=\alpha(H_1)$) with~$v_i^j$.

  Continue the construction for decreasing~$i$, that is, until~$i=0$,
  do the following. Set~$i\leftarrow i-1$, and add a vertex
  set~$V_i:=\{v_i^1,\ldots,v_i^{2^i}\}$. Then, for each~$v_i^j$,
  make~$v_i^j$ adjacent to~$v_{i+1}^j$ and~$v_{i+1}^{j+2^i}$, and add
  a copy of~$H_1$ and identify~$c$ with~$v_i^j$. Let~$V_T :=
  \bigcup_{i\in [\mu]} V_i$ and call the copies of~$H_1$ that are
  added in this step the~$V_T$-attachments. Let~$G':=(V',E')$ denote
  the graph obtained this way. Conclude the reduction by
  setting~$k':=k+|V_T|$. Observe that~$|V'|=O(n)$
  and~$|E'|=O(n)$. Thus, to show the lemma, it remains to show the
  equivalence of the \sctvc{} instance~$(G,k)$ and of the \cpivd{}
  instance~$(G',k')$.

  If~$(G,k)$ is a yes-instance with a size-$k$ vertex cover~$S$,
  then~$S':= V_T\cup S$ is a size-$k$ $\Pi$-vertex
  deletion set: the graph~$G'-S'$ has at most~$d-1$ connected
  components with~$\gamma$-value~$\gamma(H_1)$ and all other connected
  components have~$\gamma$-value less than~$\gamma(H_1)$ (recall
  that~$d$ is the number of connected components of~$H_\Pi$ that are
  isomorphic to~$H_1$). Then, by the definition of~$H_\Pi$,~$G'-S'$
  fulfills~$\Pi$. Moreover,~$G'[S']$ is connected: $G[S'\setminus
  S]$ is a binary tree and every vertex of~$V$ has one neighbor
  in~$S'\setminus S$. 

  For the converse, observe that every connected $\Pi$-vertex deletion
  set~$S'$ deletes at least one vertex in the vc-extension. Thus, it may
  only delete vertices of the vc-extension and
  of~$V_T$-attachments. Consequently, only the base graph may contain
  induced subgraphs that are isomorphic to~$H_1$. Note that
  each~$V_T$-attachment is isomorphic to~$H_1$. Since, there are~$|V_T|$ vertex-disjoint $V_T$-attachments, at least~$|V_T|$ vertices of~$S'$ are not from the vc-extension. This implies that there is
  a set~$S^*$ of at most~$k'-|V_T|=k$ vertices in the vc-extension whose
  deletion destroys all induced subgraphs of the vc-extension that are
  isomorphic to~$H_1$. As in the the proof of
  Lemma~\ref{lem:no-ind-set}, this implies that~$G$ has a vertex cover
  of size at most~$k$.
\end{proof}}
The reduction behind Lemma~\ref{lem:con-hard} is an extension of
Construction~\ref{cons:main} which increases the vertex degree of
every vertex by at most one and additionally adds vertices whose
degree is at most~$\Delta(H_\Pi)+3$, where~$\Delta(H_\Pi)$ is the
maximum degree in~$H_\Pi$.
\begin{corollary}
  \label{cor:conn-degree}
  Let~$\Pi$ be a hereditary nontrivial graph property such that all independent
  sets are in~$\Pi$. Let~$\Delta$ be the smallest number such that
  there exists a~$\gamma$-ordering of the forbidden subgraphs of~$\Pi$
  such that a forbidden subgraph with lexicographically
  smallest~$\Gamma$-sequence has maximum
  degree~$\Delta$. Then,~\cpivd{} cannot be solved in~$2^{o(n+m)}$ time
  even if~$G$ has maximum degree~$3\Delta+1$.
\end{corollary}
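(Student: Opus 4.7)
The plan is to run the reduction from Lemma~\ref{lem:con-hard} using a $\gamma$-ordering that witnesses the bound $\Delta$, so that the chosen forbidden subgraph $H_\Pi$ has maximum degree $\Delta$. In particular, the vertex $c$ of $H_1$ used in the gadget construction has degree at most $\Delta$ in $H_1$, and hence at most $\Delta$ in $J = J(H_1)$.

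First I would re-use the degree analysis behind Corollary~\ref{cor:degree}. Since the source instance $G$ of \sctvc{} is subcubic, each vertex $v \in V$ is incident to at most three copies of $J$ in Construction~\ref{cons:main}, and each such copy contributes at most $\Delta$ to the degree of $v$ (the degree of $c$ in $J$); all other vertices of the construction sit inside a single copy of a subgraph of $H_\Pi$ and therefore also have degree at most $\Delta$. Thus, before the binary-tree extension is performed, every vertex has degree at most $3\Delta$.

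Next I would account for the two modifications made on top of Construction~\ref{cons:main} in the proof of Lemma~\ref{lem:con-hard}. The binary-tree extension adds exactly one edge at each $v \in V$, namely the one to its unique neighbor in $V_{\mu-1}$, so the degree of $v$ rises to at most $3\Delta + 1$. Each newly added vertex $v_i^j \in V_T$ has at most three neighbors inside $V \cup V_T$ (two children plus at most one parent, where for the leaves at level $\mu-1$ the two children are replaced by two vertices of $V$) and is identified with the vertex $c$ of an attached copy of $H_1$, which contributes at most $\Delta$ further neighbors. Hence every $V_T$-vertex has degree at most $\Delta + 3$, and every non-$c$ vertex inside a $V_T$-attachment has degree at most $\Delta$.

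Since $\Pi$ is nontrivial and contains all independent sets, $H_\Pi$ must contain an edge and hence $\Delta \ge 1$, which gives $\Delta + 3 \le 3\Delta + 1$. Taking the maximum over all vertex types yields an overall maximum degree of $3\Delta + 1$, and the running time lower bound from Lemma~\ref{lem:con-hard} transfers verbatim. The only potential obstacle is the degree bookkeeping for $V_T$-vertices, which works out precisely because of the inequality $\Delta \ge 1$; no new combinatorial idea beyond those already present in Lemma~\ref{lem:con-hard} and Corollary~\ref{cor:degree} is required.
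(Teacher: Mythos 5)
Your proof is correct and follows essentially the same route the paper takes: the paper's own argument is the one‑sentence remark preceding the corollary, namely that the reduction behind Lemma~\ref{lem:con-hard} raises the degree of each vertex of Construction~\ref{cons:main} by at most one (giving $3\Delta+1$ via Corollary~\ref{cor:degree}) and introduces new vertices of degree at most $\Delta(H_\Pi)+3$. You make this bookkeeping explicit (one new edge per vertex of $V$ to its leaf in $V_{\mu-1}$; at most three tree neighbors plus $\deg_{H_1}(c)\le\Delta$ for each $V_T$ vertex; $\le\Delta$ for the interior of a $V_T$-attachment) and correctly close with $\Delta\ge 1\Rightarrow \Delta+3\le 3\Delta+1$, which is exactly what the paper leaves implicit.
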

For properties~$\Pi$ that do not include all independent sets, we also obtain a tight result.
% cannot hope to obtain hardness in the case of bounded degree
% graphs. Hence, we may follow the strategy of adding a vertex that is a
% neighbor of all vertices in the regular construction.
\begin{lemma}\label{lem:con-hard-2}
  Let~$\Pi$ be a hereditary nontrivial graph property excluding some
  independent set. If the ETH is true, then \cpivd{} cannot be solved
  in~$2^{o(n+\sqrt{m})}$ time.
\end{lemma}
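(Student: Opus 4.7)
The plan is to reduce the problem for~$\Pi$ to the one for~$\bar{\Pi}$ by complementation, paralleling the passage from Lemma~\ref{lem:no-ind-set} to its~$2^{o(n+\sqrt{m})}$ corollary. By the Ramsey argument used after Corollary~\ref{cor:degree}, every hereditary nontrivial graph property contains all cliques or all independent sets; since~$\Pi$ excludes some independent set, it must contain all cliques, so the complementary property~$\bar{\Pi}=\{\overline{G}\mid G\in\Pi\}$ is hereditary, nontrivial, and contains all independent sets. Lemma~\ref{lem:con-hard} therefore applies to~$\bar{\Pi}$: from an \sctvc{} instance~$(G,k)$ with~$n$ vertices, the construction behind that lemma yields an equivalent \cpivd{} instance~$(G',k')$ for~$\bar{\Pi}$ with~$|V(G')|=O(n)$ and~$|E(G')|=O(n)$. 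I would then output~$(\overline{G'},k')$ as the \cpivd{} instance for~$\Pi$ and establish equivalence with~$(G,k)$.

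For the forward direction, I would take the canonical solution~$S':=V_T\cup S$ exhibited in the proof of Lemma~\ref{lem:con-hard}, where~$S$ is a vertex cover of~$G$ and~$V_T$ is the binary-tree vertex set attached on top of~$V$. The deletion property transfers by complementation, since~$\overline{G'}[V\setminus S']\in\Pi$ iff~$G'[V\setminus S']\in\bar{\Pi}$. It remains to check that~$\overline{G'}[S']$ is connected; for this I would use that~$G'[V_T]$ is a tree of maximum degree three, so~$\overline{G'}[V_T]$ is dense and connected as soon as~$|V_T|\ge 4$, and that each vertex of~$S\subseteq V$ has exactly one neighbor in~$V_T$ within~$G'$ (its attached leaf), hence is adjacent in~$\overline{G'}$ to all but one vertex of~$V_T$, giving a single connected component on~$S'$.

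For the backward direction, a connected~$\Pi$-vertex deletion set~$S'$ of~$\overline{G'}$ of size at most~$k'$ complements to a (possibly~$G'$-disconnected)~$\bar{\Pi}$-vertex deletion set of~$G'$ of the same size. The main obstacle I expect is that the proof of Lemma~\ref{lem:con-hard} derives a vertex cover of~$G$ from a deletion set that is connected in~$G'$, whereas here~$S'$ is only guaranteed to be connected in~$\overline{G'}$. The plan is to observe that the quantitative core of that extraction uses only~$|S'|\le k'$ and~$G'-S'\in\bar{\Pi}$: the~$|V_T|$ pairwise vertex-disjoint~$V_T$-attachments, each isomorphic to~$H_1$, force at least~$|V_T|$ deletions outside the vc-extension, leaving at most~$k$ deletions for the vc-extension; these must destroy every residual induced copy of~$H_1$ in the vc-extension, and the maximum-matching/high-girth argument from the proof of Lemma~\ref{lem:no-ind-set} then yields a vertex cover of~$G$ of size at most~$k$. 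I would carefully re-inspect the original proof and, if necessary, rerun its exchange arguments without the connectivity hypothesis.

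Finally, since~$|V(\overline{G'})|=O(n)$ and~$|E(\overline{G'})|=O(n^2)$, any~$2^{o(n+\sqrt{m})}$-time algorithm for~\cpivd{} on~$\overline{G'}$ would run in~$2^{o(n)}$ time and thus decide~\sctvc{} in~$2^{o(n)}$~time, contradicting the ETH by Theorem~\ref{thm:scvc} and establishing the claimed bound.
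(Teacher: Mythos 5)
Your proposal is correct, but it takes a genuinely different route from the paper. The paper's proof does \emph{not} reuse the~$V_T$-tree construction of Lemma~\ref{lem:con-hard}. Instead, it modifies Construction~\ref{cons:main} (applied to~$\bar{\Pi}$) by placing~$d$ rather than~$d-1$ copies of~$H_1$ in the base graph. This forces every~$\bar{\Pi}$-vertex deletion set of~$G'$ to contain some vertex~$v$ of one of these extra~$H_1$-copies, and since~$v$'s only~$G'$-neighbors lie inside that copy,~$v$ becomes adjacent in~$\bar{G'}$ to \emph{every} other vertex of the deletion set, so connectivity in the complement is automatic. The backward direction of the equivalence is then nothing more than dropping connectivity and invoking the counting argument already present in Lemma~\ref{lem:no-ind-set}, so the whole proof is a few lines.

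Your version — complementing the graph~$G'$ produced by Lemma~\ref{lem:con-hard} for~$\bar{\Pi}$ — also works. In the forward direction your reasoning is sound, though the statement ``$\overline{G'}[V_T]$ is dense and connected as soon as~$|V_T|\ge 4$'' is not true for arbitrary max-degree-$3$ trees: a star~$K_{1,3}$ has a disconnected complement. It holds here because~$G'[V_T]$ is a complete binary tree, which is not a star once~$\mu\ge 3$ (i.e.~$|V_T|=n-1\ge 7$); for smaller~$n$ the instance is trivial. For the backward direction you correctly anticipate the issue (Lemma~\ref{lem:con-hard} uses connectivity of~$S'$ in~$G'$ to force~$S'$ off the base graph, which is no longer available) and correctly locate the fix: with~$a$ untouched~$V_T$-attachments,~$r$ base-graph deletions, $s$ vc-extension deletions, and a size-$q$ residual matching, the count~$a+q\le r$ together with~$(|V_T|-a)+r+s\le k+|V_T|$ yields a vertex cover of size~$s+q\le k$. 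So the plan closes. In short: both proofs reduce to~$\bar{\Pi}$ and complement, but the paper engineers the construction so that the connectivity argument is one sentence and the backward direction comes for free from Lemma~\ref{lem:no-ind-set}, whereas your approach reuses the existing Lemma~\ref{lem:con-hard} gadget at the cost of redoing its backward counting argument and analyzing the complement of the binary tree.
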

\appendixproof{Lemma~\ref{lem:con-hard-2}}
{
\begin{proof}
  Let~$\bar{\Pi}$ denote the property that contains all graphs that
  are complement graphs of a graph in~$\Pi$. Reduce from \sctvc{} by
  first performing Construction~\ref{cons:main} with one difference: 
  add~$d$ (instead of~$d-1$) copies of~$H_1$ to the base
  graph. Let~$G'$ denote the graph obtained by this
  construction. Observe that at least one vertex of one of these~$d$
  copies of~$H_1$ is contained in any~$\bar{\Pi}$-vertex deletion set
  of the resulting graph. Moreover, it is sufficient to delete only
  one vertex~$v$ in this copy of~$H_1$. Thus,~$G'$ has
  a~$\bar{\Pi}$-vertex deletion set of size at most~$k+1$ if and only
  if~$G$ has a vertex cover of size at most~$k$. Now, let~$\bar{G'}$
  denote the complement graph of the graph obtained by the
  modified~Construction~\ref{cons:main}. There is a
  minimum-cardinality $\Pi$-vertex deletion set that contains~$v$ and
  no further vertex from the copy of~$H_1$ that
  contains~$v$. Hence,~$v$ is in~$\bar{G'}$ adjacent to all other
  vertices of this minimum-cardinality $\Pi$-vertex deletion set,
  which makes the subgraph induced by this vertex deletion set
  connected. Therefore,~$\bar{G'}$ has a connected $\Pi$-vertex
  deletion set of size at most~$k$ if and only if~$G$ has a vertex
  cover of size at most~$k$. The running time bound follows from
  observing that~$\bar{G'}$ has~$O(|V|^2)$ edges and~$O(|V|)$
  vertices where~$V$ is the vertex set of the \sctvc{} instance.
\end{proof}
} Finally, let~\cppivd{} denote the variant of~\cpivd{} where the
input graph is restricted to be planar. We can show the same running
time lower bound as for the unconstrained vertex deletion problem on
planar graphs.
\begin{theorem}
  \label{thm:con-hard-planar} Let~$\Pi$ be a hereditary graph property
  containing all independent sets and excluding at least one planar
  graph. Then, if the ETH is true,~\cppivd{} cannot be solved
  in~$2^{o(\sqrt{n})}$ time.

\end{theorem}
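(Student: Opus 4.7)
The plan is to combine the planar reduction of Theorem~\ref{thm:planar} with a planar version of the connectivity gadget from Lemma~\ref{lem:con-hard}, reducing from \pvc{}. Given a planar \pvc{} instance $(G,k)$ with a fixed planar embedding, I first apply the planar variant of Construction~\ref{cons:main}, choosing $H_\Pi$ from the nonempty family of planar minimal forbidden induced subgraphs of $\Pi$, to obtain a planar \pivd{} instance $G'$ with $V\subseteq V(G')$, $|V(G')|=O(n)$, and a planar embedding inherited from $G$; in this embedding each face $f$ of $G$ corresponds to an ``inflated'' face $\hat f$ of $G'$ whose boundary still contains exactly the same vertices of $V$ as the boundary of $f$.

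I then augment $G'$ with a planar face-based connector in place of the binary tree of Lemma~\ref{lem:con-hard}. For each face $f$ of $G$, insert a new vertex $r_f$ inside $\hat f$, join $r_f$ to every vertex of $V$ on the boundary of $f$, and attach a copy of $H_1$ to $r_f$ by identifying the cut-vertex $c$ of $H_1$ with $r_f$. All of these additions lie inside $\hat f$, so planarity is preserved. Setting $V_T:=\{r_f\}$, letting $G^*$ denote the resulting graph, and $k^*:=k+|V_T|$, one has $|V_T|=|F(G)|=O(n)$ and hence $|V(G^*)|=O(n)$.

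For the forward direction, given a size-$k$ vertex cover $S$ of $G$, the set $S^*:=S\cup V_T$ satisfies $G^*[S^*]$ connected: each $s\in S$ is adjacent to $r_f\in V_T$ for any face $f$ incident to $s$, and for any two faces $f,f'$ of $G$ sharing an edge $\{u,v\}$ the vertex cover places $u$ or $v$ in $S$ as a common neighbor of $r_f$ and $r_{f'}$; iterating along a spanning tree of the (connected) dual of $G$ puts all $r_f$ into one component. That $G^*-S^*\in\Pi$ then follows as in Lemma~\ref{lem:no-ind-set}, since $V_T$ destroys its own attachments while the base graph still contributes only $d-1$ copies of $H_1$. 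For the backward direction, the base graph is disconnected from the rest of $G^*$, so any connected $\Pi$-vertex deletion set $S^*$ of size at most $k^*$ avoids the base graph; the $|V_T|$ vertex-disjoint $H_1$-attachments absorb at least $|V_T|$ deletions, leaving at most $k$ vertices of $S^*$ in the vc-extension of $G'$, which by the proof of Lemma~\ref{lem:no-ind-set} (lifted to the planar setting of Theorem~\ref{thm:planar}) yields a vertex cover of $G$ of size at most $k$. The main obstacle is the WLOG step that pushes any deletion inside an $H_1$-attachment onto the corresponding $r_f$ without disconnecting $S^*$; this works because the boundary neighbors of $r_f$ in $V$ contain a vertex of the vertex cover, by the same argument as in the forward direction, keeping $r_f$ attached to the rest of $S^*$ after the swap. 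A $2^{o(\sqrt{|V(G^*)|})}$-time algorithm for \cppivd{} would then solve \pvc{} in $2^{o(\sqrt{n})}$ time, contradicting the ETH.
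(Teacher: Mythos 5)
Your construction and argument essentially match the paper's: insert a vertex $r_f$ into each face of the planar \pvc{} instance, join it to the $V$-vertices on the face boundary, attach a copy of $H_1$ at each $r_f$, and connect the solution through a spanning tree of the (connected) planar dual, exactly as the paper does. The one substantive difference is in the backward direction, where the paper argues directly from connectivity that $S'\cap V$ covers every edge (because $\{u,v\}$ is the cut separating each edge gadget from the rest of the graph and each such gadget must lose a vertex), whereas you invoke the $S_A$-translation from Lemma~\ref{lem:no-ind-set}; both routes are sound once one notes the base graph is in separate components and therefore untouched. Your closing remark about a ``WLOG step'' that pushes deletions onto $r_f$ is, however, both unnecessary (the counting plus unreachability of the base graph already give the $\le k$ bound on deletions in the vc-extension, from which a vertex cover follows without any normalization of $S^*$) and circularly justified as written, since it appeals to ``a vertex of the vertex cover'' that you are still in the process of deriving; you should drop that sentence rather than try to repair it.
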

\appendixproof{Thm~\ref{thm:con-hard-planar}} { \begin{proof}
  Reduce from \textsc{Planar Vertex Cover}. Given an
    instance~$(G,k)$ of \textsc{Planar Vertex Cover}, fix an arbitrary
    embedding of~$G$. Now add one vertex into every face and make this
    vertex adjacent to all vertices on the boundary of the face. Call
    the additional vertex set~$V_F$ and let~$E_F$ denote set of edges
    incident with vertices of~$V_F$. Now, perform
    Construction~\ref{cons:main} on~$G$ as before, that is, add a base
    graph, replace each edge of~$G$ by a copy of~$J$ and add for each
    vertex~$v$ of~$G$ a copy of~$H_1-(V(J)\setminus \{c\})$ and
    identify~$c$ with~$v$. Now, for each vertex~$v\in V_F$ add a copy
    of~$H_1$ and identify~$c$ with~$v$. Call the resulting graph~$G'$
    and set~$k':=k+|V_F|$. Observe that~$|V_F|=O(|V|)$ and
    that~$|E_F|=|V_F|$ (we may assume without loss of generality
    that~$G$ is connected). Moreover,~$G'$ is planar, it is obtained
    from a planar graph by replacing edges and vertices by planar
    graphs. Thus, if the two instances are equivalent, then any~$2^{o(\sqrt{n})}$-time algorithm for \cppivd{} implies a~$2^{o(\sqrt{n})}$ for \textsc{Planar Vertex Cover}. To complete the proof, we thus need to show that~$G$ has a
    vertex cover of size~$k$ if and only if~$G'$ has a connected
    $\Pi$-vertex deletion set of size~$k'$.

  $\Rightarrow$: If~$S$ is a size-$k$ vertex cover of~$G$, then~$S\cup
  V_F$ is a connected size-$k$ $\Pi$-vertex deletion set of~$G'$:
  After the deletion of~$S\cup V_F$ all connected components of~$G'$
  have~$\gamma$-value lower than~$H_1$ and thus, by the definition
  of~$H_\Pi$ and the construction of the base graph, the remaining
  graph is contained in~$\Pi$. It remains to show that~$G'[S\cup V_F]$
  is connected. Every vertex in~$S$ has a neighbor in~$V_F$. Moreover,
  every pair of vertices~$u,v\in V_F$ that were placed into faces
  of~$G$ whose boundaries share an edge are connected in~$G'[S\cup
  V_F]$ because one of the two endpoints of this edge is contained
  in~$S$ and both endpoints are neighbors of~$u$ and~$v$. Thus, all
  vertices of~$V_F$ are connected in~$G'[S\cup V_F]$ 
  since the dual graph of the planar graph~$G$ is connected.

  $\Leftarrow$: Let~$S'$ denote a connected~$\Pi$-vertex deletion set
  of size at most~$k'$ in~$G'$. Since~$G'[S']$ is connected,~$S'$
  cannot contain any vertices of the base graph. Consequently, by the
  construction of the base graph and the choice of~$\Pi$, all induced
  subgraphs isomorphic to~$H_1$ must be destroyed in the connected
  component containing the vc-extension. Thus, for each copy of~$H_1$
  that was attached to a vertex of~$V_F$ at least one vertex must be
  deleted. This implies that~$S'$ contains at most~$k$ vertices of
  the vc-extension. Observe that for each subgraph isomorphic to~$H_1$
  in the vc-extension, there are two vertices~$\{u,v\}$ that are from~$V$ and
  deleting these two vertices cuts the other vertices from the rest of
  the graph. Thus, since~$G'[S']$ is connected at least one of these
  two vertices is contained in~$S'$. Let~$S$ denote the set of these
  at most~$k$ vertices. Since for each edge~$\{u,v\}$ of~$G$, the graph~$G'$ contains a such subgraph isomorphic to~$H_1$ having~$u$ and~$v$ as cut-vertices,~$S$ is a vertex cover in~$G$.
\end{proof}
}
\bibliographystyle{abbrv} 
\bibliography{eth-bound}

% \appendix
% \newpage
% \section{Proofs}

% \appendixProofText

\end{document}